\theoremstyle{thmstyleone}%
\newtheorem{theorem}{Theorem}
\newtheorem{proposition}[theorem]{Proposition}%
\theoremstyle{thmstyletwo}%
\newtheorem{example}{Example}%
\newtheorem{remark}{Remark}%
\newtheorem{lemm}{Lemma}
\theoremstyle{thmstylethree}%
\newtheorem{definition}{Definition}%
\newcommand{\R}{\mathbb R}
\newcommand{\BE}{\begin{equation}}
\newcommand{\EE}{\end{equation}}
\newcommand{\sF}{{\mathscr{F}}}
\newcommand{\argmax}{\mathop{\rm argmax}}
\newcommand{\argmin}{\mathop{\rm argmin}}
\newcommand{\cH}{{\mathcal{H}}}
\newcommand{\cA}{{\mathcal{A}}}
\newcommand{\bsb}[1]{\boldsymbol{#1}} 
\newcommand{\ZZ}{{\mathbb Z}}
\DeclareMathOperator{\e}{e}
\newcommand{\N}{{\mathbb{N}}}
\begin{document}

\title[Article Title]{Questioning Normality: A study of wavelet leaders distribution}


\author[1]{\fnm{Wejdene} \sur{Ben Nasr}}\email{wejdene.nasr-ben-hadj-amor@u-pec.fr}
\equalcont{These authors contributed equally to this work.}

\author*[2,3]{\fnm{Hélène} \sur{Halconruy}}\email{helene.halconruy@telecom-sudparis.eu}
\equalcont{These authors contributed equally to this work.}

\author[1]{\fnm{Stéphane} \sur{Jaffard}}\email{jaffard@u-pec.fr}
\equalcont{These authors contributed equally to this work.}

\affil*[1]{\orgdiv{LAMA}, \orgname{Univ Paris Est Creteil, Univ Gustave Eiffel CNRS}, \orgaddress{ \city{Cr\'eteil}, \postcode{F-94010}, \country{France}}}

\affil[2]{\orgdiv{SAMOVAR}, \orgname{T\'el\'ecom SudParis}, \orgaddress{ \city{Évry-Courcouronnes}, \country{France}}}

\affil[3]{\orgdiv{Modal'X}, \orgname{Nanterre Université}, \orgaddress{ \city{Nanterre}, \country{France}}}


\abstract{The motivation of this article is to estimate multifractality classification and  model selection parameters :  the first-order scaling exponent $c_1$ and the second-order scaling exponent (or intermittency coefficient) $c_2$. These exponents are built on  wavelet leaders, which therefore constitute fundamental tools in applied multifractal analysis. While most estimation methods, particularly Bayesian approaches, rely on the assumption of log-normality, we challenge this hypothesis by statistically testing the normality of log-leaders. Upon rejecting this common assumption, we propose instead a novel model based on log-concave distributions. We validate this new model on well-known stochastic processes, including fractional Brownian motion, the multifractal random walk, and the canonical Mandelbrot cascade, as well as on real-world marathon runner data.  
Furthermore, we revisit the estimation procedure for $c_1$, providing confidence intervals, and for $c_2$, applying it to fractional Brownian motions with various Hurst indices as well as to the multifractal random walk. Finally, we establish several theoretical results on the distribution of log-leaders in random wavelet series, which are consistent with our numerical findings.  }

\keywords{Log-concave distributions, multifractal analysis, random wavelet series, scaling exponent estimation, wavelet leaders.}



\maketitle

\section{Introduction}\label{sec1}

The motivation behind this paper is to estimate regularity parameters, namely $c_1$ and $c_2$, which represent the first- and second-order scaling exponents. These parameters are crucial for tasks such as signal and texture classification, with applications in biomedicine, as well as for analyzing fully developed turbulence. They also play a key role in selecting suitable mathematical or stochastic models that exhibit multifractal behavior in fields such as finance, geophysics, and fluid dynamics.

\noindent
A common approach involves directly estimating local regularity, often using fractional processes, including extensions of fractional Brownian motion (fBm) to capture local variations \cite{cohen2013fractional, ayache2018multifractional}. Techniques based on local increments \cite{istas1997quadratic, bertrand2013local} and wavelets \cite{ayache2012linear, ayache2015linear, jin2018estimation} have been widely explored. However, direct estimation faces key limitations, including assumptions about process types (e.g., multifractional processes), reliance on small data intervals, and instability when dealing with multiplicative cascades or L\'evy processes with erratic exponents. These challenges underscore the importance of global estimation methods, shifting the focus to multifractal analysis.

\noindent
The purpose of multifractal analysis is to estimate the sizes of sets where pointwise regularity takes specific values (referred to as the \textit{multifractal spectrum}, first defined in the seminal paper by Frisch and Parisi \cite{ParFri85}, see Section \ref{sec:ms}). This approach exploits scaling properties from log-log plots of empirical moments of order $p \in \mathbb{R}$, as originally proposed by Kolmogorov \cite{k41}. Initially restricted to $p > 0$ in \cite{ParFri85}, these methods were later extended to allow both positive and negative values of $p$, leading to methods such as WTMM (Wavelet Transform Maxima Method) \cite{muzy1991wavelets} and DFA (Detrended Fluctuations Analysis) \cite{kantelhardt2001detecting}. However, these methods lacked theoretical results linking them to the multifractal spectrum.  \\

\noindent
State-of-the-art multifractal analysis now relies on orthonormal wavelet decompositions, where multiscale quantities derived from the suprema of wavelet coefficients-known as \textit{wavelet leaders}-are used to compute scaling functions $\zeta_f$ (see Definition \ref{def:scalfun} below). Their Legendre transforms provide an upper bound for the multifractal spectrum. These methods have proven effective in applications such as model selection and classification. For instance, the estimation of moments for negative values of $p$ has been used to reject turbulence models whose scaling functions did not align with those measured from experimental data \cite{muzy1991wavelets, LASHERMES:2008:A}. However, reliable applications require a statistical framework that provides confidence intervals.  \\
An important step was the introduction by Abry and Wendt of bootstrap methods to estimate scaling exponents, improving accuracy when only a few scales are available \cite{Wendt2007b, wendt2007bootstrap,wendt2009wavelet}. In practice, rather than relying on the entire scaling function, a few coefficients from its Taylor expansion at $p=0$, computed as log-cumulants, are often sufficient. The first two cumulants, $c_1$ and $c_2$, capture key multifractal properties: $c_1$ represents the most frequently observed pointwise regularity exponent, while $c_2$ quantifies fluctuations in local regularity. Bayesian methods introduced by Combrexelle, Wendt, Dobigeon, Tourneret, McLaughlin, and Abry, under the assumption of log-normality, have significantly improved the estimation of these quantities \cite{combrexelle2015bayesian, Combrexelle2016EUSIPCO,LeonTSP2022}. The third major multifractality parameter used in classification and model selection, the \textit{uniform regularity exponent} $H^{\min}_f$, is of a different nature, as it is derived directly from wavelet coefficients rather than wavelet leaders. Its statistical estimation will be addressed in \cite{BNHHSJ2}.  \\

\noindent
Motivated by the estimation of the scaling parameters $c_1$ and $c_2$, we challenge the commonly assumed log-normality hypothesis, which is often used to justify Bayesian methods and bootstrap approaches. We explore key modeling and statistical questions: Is the log-normality assumption valid for widely studied stochastic processes? To what extent does it hold for real-world signals? When it fails, can more flexible statistical models better capture the observed distributions of wavelet leaders?  \\
To address these questions and empirically validate our statistical model, we first test for normality. We show that, in general, the log-leaders associated with standard stochastic models - such as fractional Brownian motion (fBm), the canonical Mandelbrot cascade, and the multifractal random walk - as well as those derived from real-world marathon runner data, {are not Gaussian}. This motivates us to consider a broader non-parametric framework: \textit{log-concave distributions}. We then apply a log-concavity test developed by Cule, Samworth, and Stewart \cite{cule2010maximum} to the underlying densities, independently of whether wavelet coefficients across scales exhibit dependencies. We illustrate these findings using the same well-known stochastic models and marathon data. \\ 
Moving beyond log-normality while leveraging log-concavity and finite variance, we revisit a simple estimation method for $c_1$ and $c_2$ based on the central limit theorem, providing confidence intervals and comparing our results with standard bootstrap methods.\\  
Although dependencies between scales are empirically evident, the literature often assumes independence \cite{Morel_etal_2022_dependencies} to facilitate calculations. Adopting this approximation, we conduct a theoretical study of the distribution of wavelet leaders in random wavelet series, which supports the log-concavity model observed numerically. This theoretical and exploratory work represents an initial step toward modeling the distribution of log-leaders, a direction to be further explored in future research.  \\

\noindent
The paper is organized as follows. In Section \ref{Sec: Multifractal analysis}, we define log-leaders and motivate their study by positioning them within the broader context of multifractal analysis, emphasizing its key concepts and results. Section \ref{Sec: Stat model} introduces a new empirically validated statistical model for log-leaders: \textit{log-concave distributions}. We first challenge the common assumption of normality for log-leaders in Section \ref{Subsec: Normality}, demonstrating through both theoretical models and real-world data that this hypothesis must be rejected. We then test the hypothesis of log-concavity for their distributions in Section \ref{Subsec: logconcave}. Assuming log-concavity, we revisit the estimation of $c_1$ and $c_2$ using the central limit theorem and provide confidence intervals in Subsection \ref{Sec: estmation c1 c2}. Section \ref{Sec: theoretical study} presents a brief theoretical analysis of the distribution of log-leaders in random wavelet series, which aligns with the empirical findings from the previous section. Finally, we conclude in Section \ref{Sec: conclusion}, offering insights and suggesting directions for future research.

\section{Multifractal analysis}\label{Sec: Multifractal analysis}

\subsection{Pointwise regularity and multifractal spectrum}

\label{sec:ms}

Let $x\in\R^d\mapsto X(x)$ denote the function or sample path of a stochastic process, or random field, in $d$ variables. 
In order to examine the analytical characteristics of $X$, it is essential to define   a pointwise regularity exponent  of $X$ at each point $x$. The {\em H\"older exponent}  is the most commonly employed notion for pointwise regularity.

\begin{definition}
Let $X: x  \in \R^d \mapsto X(x)$ be a locally bounded
function, $x_0 \in \R^d$, and let $\alpha \geq 0$; $X$ belongs to $C^{\alpha}(x_0)$ if there exist a constant $C > 0$, $r>0$ and a polynomial $P$ of degree less than $\alpha$, such that 
\begin{equation}
   \text{If} \ \vert x-x_0 \vert \leq r, \ \text{then} \ \  \vert X(x) - P(x-x_0) \vert \leq C \vert x-x_0 \vert^{\alpha}.
\end{equation}
The H\"older exponent of $X$ at $x_0$ is 
\begin{equation}
    h_X(x_0)= \sup \left\lbrace  \alpha: X \in C^{\alpha}(x_0) \right\rbrace. 
\end{equation}
\end{definition}
\noindent
Following the initial  intuition of Frisch and Parisi \cite{ParFri85}, a relevant size information on the sets of singularities of $X$ is provided by the following notion.
\begin{definition}
\label{multi-spectr}
Let $X:  \R^d \mapsto \R $ be a locally bounded
function.  The multifractal spectrum  of $X$ is the function $\mathcal{D}_X$ defined on $\R$ by
\begin{equation}
  \forall H\in\R, \qquad \mathcal{D}_X(H)= \dim \left( \left\lbrace  x: h_X(x)=H \right\rbrace\right),
\end{equation}
where 
$\dim$ denotes the Hausdorff dimension 
(with the convention, $\dim (\emptyset) =0$).
\end{definition}
\noindent
In applications, the purpose of multifractal analysis is to obtain numerically robust estimates of the multifractal spectrum. The state-of-the-art methods are based on the wavelet characterization of the H\"older exponent, which we review in the next section.
\subsection{Wavelet based characterization of H\"older exponent}
\subsubsection{Orthonormal wavelet bases}
Let $\{ \psi^{(i)} \}_{i=1,...,2^d-1}$ be  a set of \emph{mother wavelets}, each of them being a smooth   function with fast decay such that the $\psi^{(i)}$ and their derivatives up to a given order $N_\psi$ have fast decay. Furthermore,  the set of dilated (to scales $2^j$) and translated (to space/time location $2^jk$) templates 
\begin{equation}
  \big\{  2^{dj/2} \psi^{(i)}(2^{-j}x-k)  \big\} \quad  \text{for} \  i = 1,...,2^{d}-1 \ , j \in \mathbb{Z},\ \ k \in \mathbb{Z}^{d} ,
\end{equation}
forms an orthonormal basis of $L^2(\R^d)$, see \cite{meyer1992wavelets,Daubechies1988}.  This implies that the mother wavelets $\{\psi^{(i)} \}_{i=1,...,2^d-1}$  possess  vanishing moments: 
$\int_\R P(x) \psi^{(i)}(x) dx = 0 $ for any polynomial $P$ of degree strictly smaller than $N_\psi$. The \emph{wavelet coefficients} of a function $X$ are defined for $i=1,...,2^{d}-1 \ , j \in \mathbb{Z},\ \ k \in \mathbb{Z}^{d}$ as 
\begin{equation}\label{Eq: wavelet coeff}
    c_{j,k}^{(i)}= 2^{dj} \int_{\mathbb{R}^{d}} X(x) \psi^{(i)}(2^{-j}x-k) dx,
\end{equation}
where an $L^1$ normalization is used for wavelet coefficients.
\subsubsection{Wavelet leaders}
Wavelet leaders are multiscale quantities that provide significant advantages over traditional wavelet coefficients in multifractal analysis, both theoretically and practically \cite{Jaffard2004,Jaffard2015}. These properties make them particularly well-suited for developing a multifractal formalism, as will be discussed in Section \ref{spect-estimation}.
\begin{definition}
Let $X$ be a locally bounded function. The wavelet leaders of $X$ are defined as 
\begin{equation}\label{defwl}
\ell_{j,k} = \ell_{\lambda }= \underset{ \lambda ' \subset 3 \lambda}{\sup} \vert c_{\lambda ' }\vert,
\end{equation}
where $k=(k_1,...,k_d)$, $\lambda= \lambda_{j,k}= [k_1 2^{-j} , (k_1 +1)2^{-j} ) \times ... \times [k_d 2^{-j} , (k_d +1)2^{-j} ) $ denotes the dyadic cube of scale $j$, $3 \lambda $ denotes the homothetic cube with same center and three times wider, and $\lambda '$ is a  dyadic cube included in $3 \lambda $  ($\lambda '$ has width $2^{-j'}$ with $j' \geq j$). 
\end{definition}

\noindent
A key property of wavelet leaders is that, under a mild uniform regularity assumption on $X$,  they allow to recover the H\"older exponent in the limit of fine scales through a log-log plot regression, see \cite{Jaffard2004}.

\begin{proposition}
 Let $X: \mathbb{R}^{d} \rightarrow \mathbb{R}$ be such that 
\BE \label{unifreg} \exists \varepsilon >0 \; :  \qquad X \in C^{\varepsilon} (\mathbb{R}^{d}).  \EE
Then, the H\"older exponent of X at $x_0$ can be recovered from the wavelet leaders of $X$  by  
\begin{equation} \label{poiholmq}
    h_X(x_0) = \underset{j \rightarrow + \infty}{\lim  \inf} \dfrac{\log (\ell_{\lambda_{j,k}}(x_0))}{\log ( 2^{-j})},
\end{equation}
where $\lambda_{j,k}(x_0)$ the unique dyadic cube of length $2^{-j}$ which includes $x_0$.
\end{proposition}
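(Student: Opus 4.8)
The plan is to prove \eqref{poiholmq} by a double inequality. Write $\ell_j:=\ell_{\lambda_{j,k}(x_0)}$ and let $\tilde h$ denote the right-hand side of \eqref{poiholmq}. Throughout, the uniform regularity assumption \eqref{unifreg} plays three roles: it makes $X$ continuous and locally bounded (so the coefficients $c_\lambda$ and their leaders are finite and the wavelet series of $X$ converges), it forces a uniform coefficient decay away from $x_0$, and it rules out the pathologies that make bare coefficients unreliable.

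First I would prove $\tilde h\ge h_X(x_0)$. The tool is the classical decay estimate: if $X\in C^\alpha(x_0)$ with $\alpha<N_\psi$, then subtracting the approximating polynomial $P$ and exploiting the $N_\psi$ vanishing moments of the $\psi^{(i)}$ gives, for every dyadic cube $\lambda'$ of scale $j'\ge j$,
\[
  |c_{\lambda'}|\;\le\; C\,2^{-\alpha j'}\bigl(1+2^{j'}|x_0-x_{\lambda'}|\bigr)^{\alpha},
\]
where $x_{\lambda'}$ is the corner of $\lambda'$; this follows by splitting the defining integral into a neighbourhood of $x_0$, where the $C^\alpha(x_0)$ bound $|X(x)-P(x-x_0)|\le C|x-x_0|^\alpha$ applies, and a far region controlled by \eqref{unifreg} and the fast decay of $\psi^{(i)}$. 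Since $\lambda'\subset 3\lambda_{j,k}(x_0)$ forces $|x_0-x_{\lambda'}|\lesssim 2^{-j}$, the polynomial factor is $\lesssim 2^{\alpha(j'-j)}$ and the bound collapses to $|c_{\lambda'}|\le C'2^{-\alpha j}$, uniformly in $\lambda'\subset 3\lambda_{j,k}(x_0)$. Taking the supremum yields $\ell_j\le C'2^{-\alpha j}$; dividing $\log\ell_j$ by the \emph{negative} quantity $\log 2^{-j}$ reverses the inequality and gives $\liminf_j \log\ell_j/\log 2^{-j}\ge\alpha$. Letting $\alpha\uparrow h_X(x_0)$ yields $\tilde h\ge h_X(x_0)$.

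The reverse inequality $h_X(x_0)\ge\tilde h$ is the delicate direction, and is exactly where leaders outperform individual coefficients. Fixing $\alpha<\tilde h$, the definition of the liminf provides $C$ with $\ell_j\le C2^{-\alpha j}$ for all $j$, which by the leader definition bounds \emph{every} coefficient $c_{\lambda'}$ with $\lambda'\subset 3\lambda_{j,k}(x_0)$. I would then reconstruct $X$ from its wavelet series and estimate $X(x)-P(x-x_0)$ for $x$ near $x_0$, choosing $J$ with $2^{-J}\sim|x-x_0|$ and splitting the sum over scales $j\le J$ and $j>J$. The decisive geometric fact is that, thanks to the factor three in the leader definition, the cubes carrying the non-negligible wavelets at $x$ are nested inside $3\lambda_{j,k}(x_0)$ at each relevant scale, so their coefficients inherit the bound $C2^{-\alpha j}$ (and $C2^{-\alpha J}$ at the finest scales). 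Summing against the rapidly decaying wavelets and their derivatives produces $|X(x)-P(x-x_0)|\le C|x-x_0|^{\alpha}$, up to a possible logarithmic correction, hence $X\in C^{\alpha'}(x_0)$ for every $\alpha'<\alpha$; this correction is harmless since it disappears after division by $\log 2^{-j}$ in the liminf. Letting $\alpha\uparrow\tilde h$ gives $h_X(x_0)\ge\tilde h$.

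I expect the reconstruction step of the second inequality to be the main obstacle: one must organise the double sum over scales and positions, verify the nesting of the active cubes into $3\lambda_{j,k}(x_0)$ — the very feature that lets a single leader dominate a whole cone of coefficients, and that fails for raw coefficients in the presence of oscillating (chirp-type) singularities — and keep track of the borderline logarithmic correction. The first inequality is comparatively routine, reducing to the coefficient-decay lemma together with the elementary remark that all subcubes of $3\lambda_{j,k}(x_0)$ lie within distance $\lesssim 2^{-j}$ of $x_0$.
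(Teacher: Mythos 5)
Your two-sided argument is exactly the classical proof of this result: the paper itself gives no proof, deferring to the cited reference (Jaffard 2004), and your outline --- the localized coefficient-decay estimate $|c_{\lambda'}|\le C2^{-\alpha j'}(1+2^{j'}|x_0-x_{\lambda'}|)^{\alpha}$ collapsing to $C'2^{-\alpha j}$ on the cone $\lambda'\subset 3\lambda_{j,k}(x_0)$ for one inequality, and the wavelet-series reconstruction with the scale split at $2^{-J}\sim|x-x_0|$ and the logarithmic correction absorbed by the $\liminf$ for the other --- is precisely that argument. The only reservation is that the delicate direction is left as a plan rather than executed (the choice of $P$ as the sum of the Taylor polynomials of the scale blocks, the Bernstein-type derivative bounds at coarse scales, and the tail control needed for wavelets with merely fast decay), but every ingredient you identify is the correct one and none of the steps would fail.
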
 
\noindent
When  the relationship \eqref{poiholmq} holds between a  nonnegative quantity $d_{j,k}$ and a pointwise exponent $h$, one says that the $d_{j,k}$ form  a { \em multiresolution quantity associated with the exponent $h$.}
\subsection{Multifractal formalism based on wavelet leaders \label{spect-estimation}}
In practical applications, multifractal analysis provides a global approach to characterizing the fluctuations of $h_X$ across time or space, as captured by the multifractal spectrum. This spectrum can be estimated from data using the mathematical frameworks known as {\em multifractal formalisms}. In this article, we focus on the specific formalism adapted to pointwise H"older regularity, which is based on {\em wavelet leaders}; see \cite{jaffard2007wavelet}.
\begin{definition}
\label{def:scalfun} 
Let $X \in L^{\infty}_{\mathrm{loc}}(\R^d)$, the leader structure fonctions are  defined as the sample moments of the wavelet leaders 
\begin{equation} \label{eqhuit}
    \forall q \in \R, \;\;\;  S_{\ell}(j,q)= 2^{-dj} \sum_{k} \vert \ell_{j,k} \vert^q. 
\end{equation}
The leader scaling function of $X$ is  
\begin{equation}\label{zeta_f}
   \forall q \in \R, \;\;\; \zeta_X(q)= \underset{j \rightarrow + \infty }{\lim \inf} \dfrac{\log(S_{\ell}(j,q))}{\log(2^{-j})}.
\end{equation}
\end{definition}
\noindent
Note that  \eqref{zeta_f} implies that, in the limit of fine scales, $S_{\ell}(j,q)$ exhibits a power law behavior with respect to the scale $2^{-j}$, $ S_{\ell}(j,q) \approx F_q 2^{-j \zeta_X(q)}$. The function $\zeta_X(q)$ is necessarily concave, see \cite{jaffard2007wavelet}; the multifractal formalism is defined via the Legendre transform of the mapping $q \mapsto \zeta_X(q)$. 
\begin{definition}
The leader Legendre spectrum is defined by  
\begin{equation} \label{eq:legtrans}
\forall H\in\R, \qquad    \mathcal{L}_X(H)= \underset{q \in \R}{\inf} ( d + qH- \zeta_{X}(q)).
\end{equation}
\end{definition}
\noindent
If \eqref{unifreg}  holds, then  $\mathcal{L}_X$  provides an upper bound of the multifractal spectrum of $X$ (see \cite{PART2}):
\begin{equation} \forall H \in \R, \qquad 
\mathcal{D}_X(H) \leq \mathcal{L}_X(H).
\end{equation}

\noindent
This result follows from  \eqref{poiholmq}: it holds because  wavelet leaders $\ell_{j,k}$ are multiresolution quantities associated with the exponent $h_X$. Consequently, from a mathematical perspective, the Legendre spectrum based on wavelet leaders provides information about the sizes of singularity sets in the data that would not be accessible if one were to use wavelet coefficients directly (i.e., if the \emph{wavelet scaling function} were employed instead). \\
\noindent
From a numerical standpoint, using wavelet coefficients for $q < 0$ in the definition of the structure function leads to numerical instabilities for the following reason. If $X$ is a random variable, consider its corresponding probability distribution function (PDF):  
\begin{equation} \label{pdfcoeff}  
\text{for } A \geq 0, \qquad F_X (A) = \mathbb{P} ( | X | \leq A ).  
\end{equation}  

\noindent
For many models and in a wide range of applications, the function $F_X$ associated with wavelet coefficients at a given scale is of the form  
\begin{equation}  
F_X (A) \sim c + c' A  
\end{equation}  
(after appropriate rescaling). This is the case, for example, in the widely used generalized Gaussian mixture models; see \cite{Portilla_Strela_W_S_2001,lyu2008modeling,BNHHSJ2}. The case $c \neq 0$ corresponds to mixtures that include a Dirac mass at the origin, meaning that $X$ has no moments of negative order. A linear behavior of $F_X (A)$ for small $A$ is observed, for instance, in distributions with continuous and nonvanishing densities for small $A$; in this case, $X$ has moments of order $q > -1$. Additionally, for wavelet coefficients, the empirical moments
\begin{equation*}
2^{-j} \sum_{i,k}|c^{(i)}_{j,k}|^q
\end{equation*}
which are estimated are numerically highly unstable and are not shift invariant (an appropriate shift of the data can make a wavelet coefficient vanish). However, these moments yield no relevant information on the pointwise regularity of the sample paths because they do not include information on the correlations of the locations of small wavelet coefficients which is crucial for inferring the existence of large values taken by the H\"older exponent (see e.g.\cite{Jaffard2015} where the example of the sample paths of Brownian motion is studied, showing that the Legendre spectrum based on wavelet coefficients does not yield the multifractal spectrum because of the values taken by negative moments). In \cite{BNHHSJ2} a detailed exposition of these questions will be given together with its implications for the multifractal analysis of the sample paths of the corresponding processes. \\
\noindent
The poor behavior of scaling functions based on wavelet coefficients for $q < 0$ contrasts sharply with those based on wavelet leaders. Small values of wavelet leaders indicate that nearby wavelet coefficients also take small values, an event much less likely if the coefficients are independent or weakly correlated. This highlights the  ability of wavelet leaders to capture correlations between the locations of small wavelet coefficients. 
\\
For many signal classes, the PDF \eqref{pdfcoeff} of wavelet leaders is extremely small for small $A$, resulting in distributions with finite moments of all orders. See \cite{Jaf05} for a discussion on how shuffling wavelet coefficients at each scale affects wavelet leader statistics, and \cite{lashermes2008comprehensive} for numerical evidence of significant discrepancies between the statistics of wavelet coefficients and wavelet leaders for small $A$. \\
These theoretical and practical considerations justify the use of both positive and negative values in the structure functions \eqref{eqhuit} as a valuable classification tool. This motivates the study of log-leader distributions in Section \ref{Sec: Stat model} and their relevance for estimating multifractality parameters in Section \ref{Sec: estmation c1 c2}.\\

\noindent Returning to our initial motivation, we now recall how classification parameters are derived from log-leaders. A key observation is that the leader structure functions $S_\ell(j,q)$, defined in \eqref{eqhuit}, can be interpreted as sample mean estimators for the ensemble averages $\mathbb{E} [\vert \ell_{ j,k} \vert ^q]$ (assuming these quantities are finite). According to \eqref{zeta_f}, they exhibit a power-law behavior in the limit of small scales, so that 
\begin{equation} \label{scalings}
    \mathbb{E} [\vert \ell_{ j,k} \vert ^q]  \approx S_\ell(j,q)  \approx  F_q \ 2^{- j \zeta_X(q)}, \quad j \to +\infty.
\end{equation}
Building on the heuristic analysis introduced in \cite{CASTAING1993387} with increments as multiresolution quantities, and later refined for continuous wavelet coefficients in \cite{delour2001intermittency}, we recall the extended interpretation that includes wavelet leaders, see \cite{wendt2008contributions,wendt2007bootstrap,wendt2006bootstrap2,Ciuciu2009} and references therein. Assume that \eqref{scalings} holds 
on a small interval around $q=0$;  taking the logarithm of both sides of \eqref{scalings} yields the   standard cumulant-generating function expansion 
\begin{equation}
\log [ \mathbb{E} ( e^{q \log( \vert \ell_{ j,k} \vert )}) ]  =  \log ( \mathbb{E} [\vert \ell_{ j,k} \vert ^q])  = \log(F_q) + \zeta_X(q) \log(2^{-j}) = \sum_{m \geq 1} C_m(j) \frac{q^m}{m!}, \label{Cum}
\end{equation}
where $C_m(j)$ denotes the $m$-th order cumulant of the random variables $\log(\ell_{ j,k})$.  From \eqref{Cum}, the cumulants $C_m(j)$ must be of the form
\begin{equation}\label{Eq: def cj}
    C_m(j) = c_{0,m} + c_m \ \log(2^{-j}).
\end{equation}
Combining \eqref{Cum} and \eqref{Eq: def cj} gives
\begin{equation}
    \log(F_q) + \zeta_X(q) \log(2^{-j}) = \sum_{m \geq 1} c_{0,m} \frac{q^m}{m!} + \sum_{m \geq 1} c_m \frac{q^m}{m!} \log(2^{-j}).
\end{equation}
Thus, $\zeta_X(q)$ can be expanded around $q=0$ as
\begin{equation}
    \zeta_X(q)= \sum_{m \geq 1} c_m \frac{q^m}{m!}. \label{zeta_expansion}
\end{equation}
Equation \eqref{Eq: def cj} provides a direct method to estimate the coefficients $c_m$, known as log-cumulants, via linear regressions of $C_m(j)$ against $\log(2^{-j})$. Furthermore, assuming $c_2 \neq 0$, the polynomial expansion \eqref{zeta_expansion} can be transformed into an expansion of $\mathcal{L}_X$ around its maximum using the Legendre transform \eqref{eq:legtrans}, see \cite{wendt2008contributions, wendt2009wavelet, Jaffard2015}
\begin{equation}
    \mathcal{L}_X(H)= d+ \frac{c_2}{2!} \left( \frac{H-c_1}{c_2} \right)^2 + \mathrm{remainder}.  \label{estim_L(h)}
\end{equation}
\noindent
Estimating the multifractal spectrum of $X$ entails the theoretical task of evaluating $\zeta_X(p)$ across a broad range of $p$ values and computing its Legendre transform. Equation \eqref{estim_L(h)} provides an alternative approach based on estimating a few parameters $c_p$, which offer a concise synthesis of the multifractal characteristics of the signal $X$. \\
The first log-cumulant $c_1$, for example, indicates the location of the maximum of the multifractal spectrum $\mathcal{D}_X$ and can be interpreted as the almost-everywhere regularity of $X$. Meanwhile, the coefficient $c_2$, often referred to as the \emph{multifractality parameter}, quantifies the width of $\mathcal{D}_X$, reflecting the range of values taken by the H\"older exponent of the signal. However, it is important to note that the expansion in \eqref{zeta_expansion} is not universally valid. Specifically, it requires that the derivatives  of $ \zeta_X$ exist in a neighborhood of $q=0$, a condition that is not satisfied by all processes. 
Consequently, this article focuses on processes for which \eqref{zeta_expansion} remains valid up to the first few orders. Since estimating the Legendre transform $\mathcal{L}_X$ relies on the log-cumulants of the log-leaders, understanding their distribution becomes essential. This is the focus of the next two sections.
\noindent
Estimating the multifractal spectrum of $X$ entails the theoretical task of evaluating $\zeta_X(p)$ across a broad range of $p$ values and computing its Legendre transform. Equation \eqref{estim_L(h)} provides an alternative approach based on estimating a few parameters $c_p$, which offer a concise synthesis of the multifractal characteristics of the signal $X$. \\
The first log-cumulant $c_1$, for example, indicates the location of the maximum of the multifractal spectrum $\mathcal{D}_X$ and can be interpreted as the almost-everywhere regularity of $X$. Meanwhile, the coefficient $c_2$, often referred to as the \emph{multifractality parameter}, quantifies the width of $\mathcal{D}_X$, reflecting the range of values taken by the H\"older exponent of the signal. However, it is important to note that the expansion in \eqref{zeta_expansion} is not universally valid. Specifically, it requires that the derivatives  of $ \zeta_X$ exist in a neighborhood of $q=0$, a condition that is not satisfied by all processes. 
Consequently, this article focuses on processes for which \eqref{zeta_expansion} remains valid up to the first few orders. Since estimating the Legendre transform $\mathcal{L}_X$ relies on the log-cumulants of the log-leaders, understanding their distribution becomes essential. This is the focus of the next two sections.
\section{Statistical model of log-leaders}\label{Sec: Stat model}
As detailed earlier, a key component of multifractal analysis is estimating the multifractal spectrum, which describes the distribution of singularities across scales, achieved numerically by computing wavelet leaders. In this section, we focus on their distribution through the following two questions:\\
\noindent
\textbf{\underline{[1] Questioning the normality of log-leaders}}
Previous works have often assumed that log-leaders follow a Gaussian distribution \cite{wendt2008contributions,combrexelle2015bayesian,Combrexelle2016EUSIPCO,combrexelle2016multifractal,wendt2013bayesian},
based on analyses using QQ-plots, which graphically compare the quantiles of two distributions (see Section \ref{Qplot}). These plots suggest a visual agreement between the empirical quantiles of log-leaders and those of a standard normal distribution. However, this method has only been experimentally applied to certain self-similar processes and multifractal multiplicative cascades \cite{combrexelle2016multifractal}, leaving the normality assumption unverified for other stochastic processes and possibly unreliable for real-world data. In Subsection \ref{Subsec: Normality}, we aim to test this assumption in different contexts, showing that despite a good visual fit, classical normality tests do not support this assumption for many theoretical and real-world data sets.  \\
\noindent
\textbf{\underline{[2] A new model for the law of log-leaders: log-concave distributions}} Our second aim is to present a more general methodology by moving beyond the restrictive normality assumption and focusing, in Section \ref{Subsec: logconcave}, on a more flexible hypothesis: log-concavity of the underlying distributions. Log-concavity is a powerful statistical assumption that encompasses normality while applying to a broader range of distributions. This added flexibility allows to model the complexity of multifractal data without imposing excessive constraints.
Our methodology involves using a statistical procedure to test whether log-leader distributions are log-concave, without assuming a specific form. This approach provides a more robust characterization of the underlying structures in the studied signals. It represents a significant advancement by moving beyond the limiting assumption of normality, thereby paving the way for more nuanced analyses suitable for the diverse behaviors observed in fields such as finance, biology, and other applications where multifractality prevails.
\subsection{Checking the normality assumption of log-leaders}\label{Subsec: Normality}
As mentioned in the introduction, it is commonly assumed in the multifractal analysis literature that log-leaders are normally distributed (e.g. \cite{wendt2008contributions,combrexelle2015bayesian,Combrexelle2016EUSIPCO,combrexelle2016multifractal,wendt2013bayesian}).
The objective of this section is to challenge this hypothesis by applying a normality test to the log-leaders calculated from simulated processes (fractional Brownian motion, compound Poisson motion...) and real physiological data from marathon runners. Most of the numerical results are obtained using a Daubechies mother wavelet $\psi$ with $N_\psi$ = 3 vanishing moments. At the end of Subsection \ref{Subsec: logconcave test}, to discuss the robustness of our results with respect to a change of wavelet basis, we compare them in the specific case of a multifractal random walk using two different Daubechies mother wavelets, $\psi$, with $N_\psi=1$ and $N_\psi=4$.
\color{black}
\subsubsection{Graphical method: Quantile-quantile plots}
\label{Qplot}
The normal quantile-quantile plot (Q-Q plot) is the most commonly used and effective diagnostic tool for checking the normality of data. It enables a graphical evaluation of the conformity between two distributions by plotting their quantiles against each other. Figure \ref{qqplot} illustrates Q-Q plots depicting the quantiles of the empirical distributions of log-leaders at different scales $j=4,5,6$ associated with various processes and real signals, against the quantiles of the standard normal distribution. Below is a concise overview and a representative selection of the processes and real data that will be used for conducting the tests.
\begin{enumerate}
\label{processes}
    \item \textbf{Fractional Brownian motion (fBm):} It is the only Gaussian self-similar process with stationary increments and its multifractal properties are controlled by one single parameter,  the \emph{Hurst exponent} $H$, which takes values in $(0,1)$, see \cite{flandrin1992wavelet}. For the tests we set $H = 0.4$ and $H = 0.7$.
    \item \textbf{Canonical Mandelbrot Cascade motion (CMC-motion):} It is defined as $A_r(t) = \underset{r \rightarrow 0}{\lim} \int_0^t \mathcal{Q}_r(s)ds$, where $\mathcal{Q}_r$ is the canonical Mandelbrot cascade (CMC) that represents a archetype of multifractal measure, see \cite{Heurteaux2016}. It is  defined at resolution $r = 2^{-J}$ (i.e., after $J$ iterations) as: 
    $ \mathcal{Q}_r(t) = c \displaystyle \prod_{(j,k)\in\mathcal J}  W_{jk}$, where \[ \mathcal J=\{(j,k),\,j=1,\dots,J \quad\text{ and }\,k\, \mbox{ such that } \quad t \in [2^{-j}k, 2^{-j}(k+1)] \}, \] the $W_{jk}$ are the multipliers for the $j$-th iteration and $k=1,...,2^j$. The multipliers have to be strictly positive random variables, and satisfy the constraint $\mathbb{E}[W]= 1$, ensuring that the cascade conserves mass in average. In this work, we will only consider CMC with log-Normal multipliers $W = 2^{-U}$, where $U \sim \mathcal{N}(\mu,\sigma^2)$ is a random variable following a Gaussian distribution of mean $\mu$ and variance $\sigma^2$. Conservation of mass implies $ \sigma^2 = 2 \mu / \ln(2)$, see \cite{wendt2008contributions}. For the numerical simulation, we use $\mu=0.37$.
    \item \textbf{Compound Poisson motion (CPM):} It is defined as $A_r(t) = \underset{r \rightarrow 0}{\lim} \int_0^t \mathcal{Q}_r(s)ds$, where $\mathcal{Q}_r$ is the compound Poisson cascade (CPC), which is a multifractal measure based on multiplicative constructions, see \cite{mandelbrot1999intermittent}. 
    The CPC signal is defined as: $ \mathcal{Q}_r(t) = c \displaystyle \prod_{(t_i,r_i)\in C_t(t)}  W_i$, where $(t_i,r_i)$ are the positions of a two-dimensional Poisson point process with intensity measure $dm(t,r)$ supported on the rectangle $I=\{(t ,r) : r_{\min} \leq r \leq  1 , -1/2 \leq t \leq T + 1/2 \}$, with $r_{\min}\in(0,1]$ and $T>0$ being fixed; the multipliers $W_i$ are defined below. We define the cone of influence at point $t$ as 
    $C_r(t)=\{(t',r') : r \leq r' \leq 1,t - r'/2 \leq t' \leq t+r'/2\}. $ The normalizing constant $c$ is defined such that $\mathbb E[\mathcal{Q}_r(t)] = 1$.
    For more details, see \cite{barral2000continuity,barral2002multifractal,chainais2002compound,chainais2007infinitely}. We use here CPCs with two different types of multipliers $W_i$:
    \begin{itemize}
        \item First, the compound Poisson cascades with log-normal (CPC-LN) multipliers of the form $W= \exp(Y)$, where $Y \sim \mathcal{N}(\mu,\sigma^2)$.
        \item Second, the compound Poisson cascades with log-Poisson (CPC-LP) multipliers where the multipliers $W$ are reduced to a constant $w$. 
    \end{itemize}
The full algorithmic statement of the construction procedure of CPC's signals can be found in \cite{frezza2019simulating} and a visual representation is shown in \cite{wendt2008contributions}. Results are obtained using $T=100$, $r_{\min}=0.02$, $\sigma^2=0.2$, $\mu=- \sigma^2 /2$ and $w=3/2$. 
\item \textbf{Multifractal random walk (MRW)}: It stands as another notable member within the category of multifractal multiplicative cascade based processes. MRW is a non Gaussian process with stationary increments, which is defined as $X(t) = \sum_{t} G_H(t) \exp(W(t))$
where the $G_H(t)$ are the increments of a fractional Brownian motion with Hurst parameter $H$ and $W$ is a Gaussian process which is independent of $G_H$ and with covariance $\mathrm{cov}[( W(t_1), W(t_2) ]= \beta^2 \log \left( \dfrac{L}{\Vert t_1-t_2\Vert+1} \right)$. For more details, see \cite{bacry2001multifractal,abry2009multifractal}. In application we set $H=0.6$, $N=L=10^5$ and  $\beta = 0.05$.

\item \textbf{Real data:} 
We use two types of physiological data from 20 amateur marathon runners: heart rate (measured in beats per minute) and speed (in m/s) of sample size ranging between $11000$ and $12000$ samples. These data were collected between 2023 and 2024 from different 42 km marathons using the Garmin Forerunner 630 watch and were graciously provided by V. Billat and her team. Examples of these two types of data from a marathon runner are shown in Fig. \ref{data}. We picked this particular real-life example because it is a typical case where multifractal analysis techniques have proved relevant in order to supply new insight on the physiological mechanisms that produced these data, see the articles of V. Billat and her collaborators \cite{wesfreid05a,billat2009detection,JSW2}.
For multifractal analysis, we made two choices based on the nature of the data. In the case of cardiac data, the scales were selected between $j = 5$ and $j = 7$ (i.e., between 26 s and 3 min 25 s), as they have been identified as relevant for such physiological signals \cite{abry2010methodology}. Conversely, no universally recognized a priori scales exist for speed data. To address this, we determined common scales across the datasets that provided the best log-log regressions, ultimately selecting a range between $j = 5$ and $j = 8$ (i.e., between 32 s and 4 min 27 s).  
\begin{figure}[!htbp]
\centering
 \includegraphics[width=0.9\textwidth]{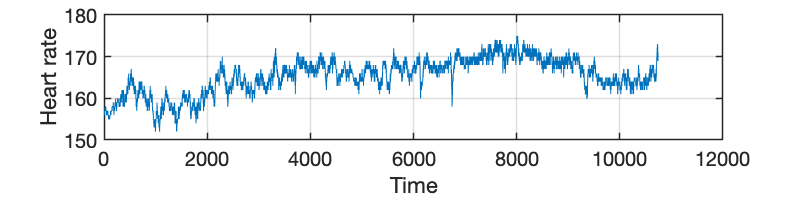}
  \vfill
  \includegraphics[width=0.9\textwidth]{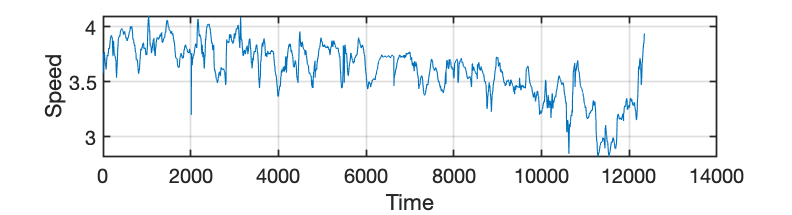}
\caption{Representation of two physiological data heart rate (beats per minute) and speed (kilometer per hour) of a marathon runner respectively of sample size 11000 and 12000. }
\label{data}
\end{figure}
\end{enumerate}
\noindent
These QQ-plots suggest that the normal distribution offers a reasonable approximation for the marginal distribution of log-leaders of these processes and these example of real data; this contrasts sharply with   leaders, as their marginal distributions are found to significantly  deviate from Gaussian distributions. Although the approximation may seem appealing, normality is a rather strong assumption, and it is therefore important to  
 complement graphical methods with formal numerical techniques and statistical normality tests  in order to robustly  assess or dismiss. The formal methods which will be tried in the following section offer a more rigorous and quantitative evaluation of whether the data follow a normal distribution.

\begin{figure}[!htbp]
\centering
 \includegraphics[scale=0.5]{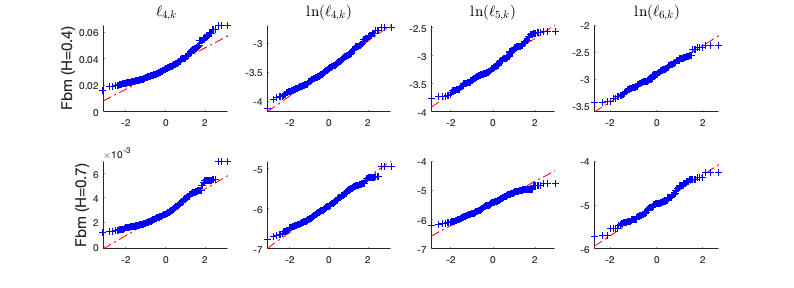}
\vfill
\includegraphics[scale=0.5]{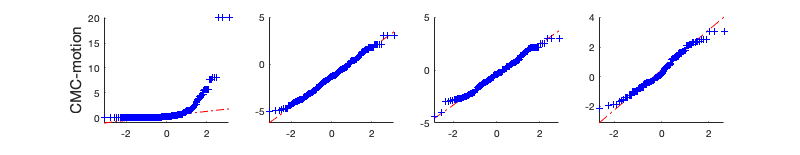}
\vfill
\includegraphics[scale=0.5]{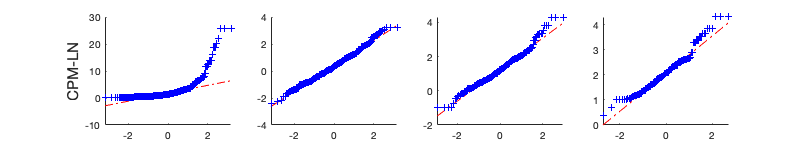}
\vfill
    \includegraphics[scale=0.5]{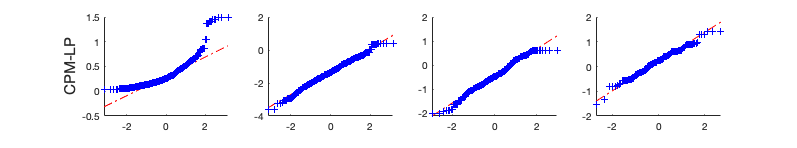}
\vfill
    \includegraphics[scale=0.5]{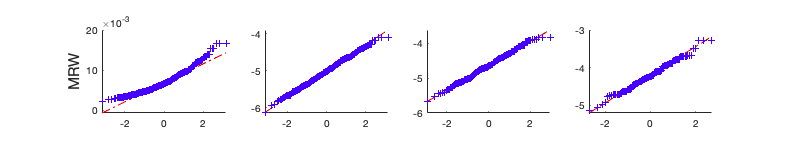}
    \vfill
    \includegraphics[scale=0.5]{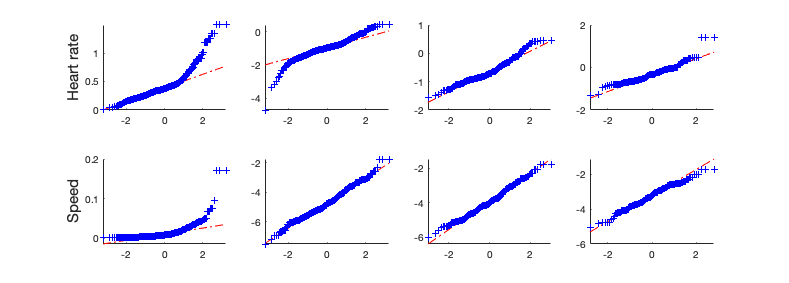}
\caption{Quantile-quantile plots of the empirical distributions (\textcolor{blue}{+}) of leaders $\ell_{j,k}$ and the log-leaders $\log(\ell_{j,k})$ at several scales $j=4,5,6$ against standard normal distribution (\textcolor{red}{-.-}) for a selection of processes of a sample size equal to $10^5$, uniformly sampled over the time interval $[0,1]$ described in section \ref{processes};  from top to bottom: fBm of  Hurst exponents 0.4 and 0.7, Canonical Mandelbrot Cascade motion, two types of  compound Poisson motions, multifractal random walk and finally,  heart rate and speed data respectively of sample size around 11000 and 12000 collected on marathon runner.
}
\label{qqplot}
\end{figure}
\subsubsection{Numerical method: Shapiro-Wilk test }\label{Subsec: Shapiro test}

\underline{\textbf{Testing normality}} The most common normality test procedures available in statistical toolbox are the Shapiro-Wilk test, Kolmogorov-Smirnov test, Anderson-Darling test and Lilliefors test. Studies by Pala (2003) \cite{mendes2003type}, Keskin (2006) \cite{keskin2006comparison}, and Razali (2011) \cite{article} have shown that the Shapiro-Wilk test is the most powerful normality test. It was the first test able of detecting deviations from normality caused by skewness, kurtosis, or both. Initially limited to sample sizes between 3 and 50, improvements to its algorithm have extended its applicability to data sets of up to 5000 observations (see
\cite{althouse1998detecting,royston1982extension,royston1982algorithm,royston1995remark}). The null hypothesis of this test is that the population is normally distributed, whereas the alternative hypothesis is that it is not.
Thus, if the $p$-value is less than the chosen significance $\alpha$-level, then the null hypothesis is rejected and there is evidence that the data tested are not normally distributed. On the other hand, if the $p$-value is greater than the chosen $\alpha$-level, then the null hypothesis, i.e. the data came from a normally distributed population, cannot be rejected. \vspace{3pt}\\
\underline{\textbf{Experiments}} 
We run 1,000 simulations of various processes in MATLAB, as described above. For each simulation, we test the normality of the log-leaders $ \log(\ell_{j,k})$, where $j=4,5,6$ using the Shapiro-Wilk test, as detailed in Subsection \ref{Subsec: Shapiro test}. For each process, we report in Table \ref{Shapiro-Wilk_1} the proportion of times that the null hypothesis was rejected with $\alpha=0.05$. In addition, we use physiological data (heart rate and speed) from 20 marathon runners such that for each signal we perform a normality test on log-leaders and subsequently we report in Table \ref{Shapiro-Wilk_2} the proportion of times where the null hypothesis was rejected.
Tables \ref{Shapiro-Wilk_1} and \ref{Shapiro-Wilk_2} clearly demonstrate a significant proportion of rejections of the normality hypothesis. We observe that as $j$ decreases, the proportion of rejections increases, indicating a stronger rejection of the normality of the log-leaders. This phenomenon can be explained by the toolbox we utilize to perform multifractal analysis and obtain the log-leaders: as $j$ decreases, we analyze smaller scales, resulting in larger size of log-leader vector at scale $j$. Consequently, the statistical test becomes more precise, yielding more significant results.

\begin{table}[h!]
\caption{Empirical rejection probabilities of the Shapiro-Wilk test (see Subsection \ref{Subsec: Shapiro test}) applied to the log-leaders $ \log(\ell_{j,k})$, $j=4,5,6$ of 1000 simulations of different known processes.}
\label{Shapiro-Wilk_1}
\begin{tabular}{@{}llccc@{}}
\toprule
$X$ &  & $\log(\ell_{4,k})$  & $\log(\ell_{5,k})$ & $\log(\ell_{6,k})$ \\
\midrule
\multirow{2}{*}{fBm} & $H=0.4$ & 0.953 & 0.968 & 0.816 \\
                     & $H=0.7$ & 0.842 & 0.681 & 0.624 \\
\midrule
CMC-motion &  & 0.783  & 0.723  & 0.636 \\
CPM-LN     &  & 0.867  & 0.817  & 0.793 \\
CPM-LP     &  & 0.990  & 0.942  & 0.767 \\
MRW        &  & 0.8670 & 0.6980 & 0.6250 \\
\botrule
\end{tabular}
\end{table}

\begin{table}[h!]
\caption{Empirical rejection probabilities of the Shapiro-Wilk test (see Subsection \ref{Subsec: Shapiro test}) applied to the log-leaders $ \log(\ell_{j,k})$, $j=4,5,6$ of physiological data (heart rate and speed) from 20 marathon runners.}\label{Shapiro-Wilk_2}%
\begin{tabular}{@{}llll@{}}
\toprule
{$X$}  & $\log(\ell_{4,k})$ & $\log(\ell_{5,k})$ & $\log(\ell_{6,k})$ \\
\midrule
{Heart rate} &  1    &       1        &      0.75      \\
{Speed} &   1     &       1        &      0.9     \\
\botrule
\end{tabular}
\end{table}

\subsection{Assessing the log-concavity of  log-leaders}\label{Subsec: logconcave}

As empirically demonstrated in the previous section, the restrictive assumption of normality for log-leaders is frequently contradicted by standard normality tests, whether applied to synthetic samples from classical mono- and multi-fractal processes or to real-life data. In this section, we introduce a non-parametric model for the density of log-leaders by considering log-concave densities. One motivation for this checking the relevance of this broader framework is that it is nonetheless relevant to carry out statistical methods in order to obtain confidence intervals, see Section \ref{Sec: estmation c1 c2} where these methods are mentioned.  
\subsubsection{Log-concave densities}
The class of log-densities gathers densities on $\mathbb R^d$ which are log-concave according to the following definition.
\begin{definition}
A function $f:\R^d\rightarrow[0,\infty)$ is \emph{log-concave} if $\log(f)$ is concave, with the convention that $\log 0=-\infty$.
\end{definition}
\noindent
Denote by $ \mathscr F_0$ the set of upper-continuous log-concave densities on $\R^d$. 
This expansive class includes many commonly encountered parametric families. Examples of univariate log-concave densities include Gaussian densities, Gumbel densities, logistic densities, $\Gamma(\alpha,\lambda)$ densities with $\alpha \geq 1$, Beta$(a,b)$ densities with $a,b\geq 1$ and Laplace densities. Multivariate Gaussian densities are also log-concave, as are uniform densities on convex, compact sets.
Moreover, this class possesses several closure and stability properties, making it a very natural infinite-dimensional generalization of the class of Gaussian densities. Among its remarkable properties, let us mention that $\mathscr F_0$ is closed under linear transformations, marginalisation, conditioning and convolution \cite{ibragimov_1956_LC_composition,walther_2009_inference}.\vspace{3pt}\\
\underline{\textbf{Applications}}
Log-concave densities naturally arise in various applications, making them highly relevant for modeling. In practice, assuming that the sample distribution is log-concave rather than normal can offer significant advantages. For example, in finance, stock returns often exhibit heavy tails and asymmetry that are inadequately captured by a normal distribution (see \cite{mandelbrot_1997_variation}). Cryptocurrencies, with their high volatility, display even more pronounced heavy-tail returns, which are better modeled by distributions derived from the generalized gamma distribution \cite{Van_2022_novel}. Other applications in finance, insurance, and operations research are explored in \cite{Resnick_2007_heavy} and \cite{Badia_2021_log}. In image processing, denoising methods assuming Gaussian noise may fail to accurately model real-world scenarios involving impulse noise; log-concave distributions like the exponential distribution better capture outliers and sudden pixel value changes \cite{Nandal_2018_sensitivity}. Additionally, log-concave distributions simplify analysis in reliability \cite{sengupta_1999_reliability} and survival theory \cite{Bagnoli_2006_log}, where the cumulative distribution functions (CDFs) often lack closed forms. The log-concavity of a density function implies that of its CDF, which aids statistical analysis in nonparametric frameworks \cite{Bagnoli_2006_log}.
\color{black}
\vspace{3pt}\\
\underline{\textbf{Estimation and testing for log-concave densities}} Despite their widespread use in diverse statistical applications due to their adaptability and computational convenience, estimating log-concave densities in high-dimensional spaces presents significant challenges. Cule, Samworth, and Stewart \cite{cule2010maximum} introduced an innovative method for estimating multidimensional log-concave densities using maximum likelihood estimation (MLE). They proved the existence and uniqueness of the log-concave MLE (MLE-LC) estimator $\widehat{f}_n$ and described an algorithm for its computation, which is implemented in the R package LogConcDEAD. The theoretical properties of the MLE-LC estimator were further explored in a companion paper by Cule and Samworth \cite{Cule_Samworth_2010}.\\
In the next subsection, we recall the definition of the log-concave maximum likelihood estimator introduced in \cite{cule2010maximum}, and describe the log-concavity test that follows. We apply this test to different processes and real-world datasets to examine the log-concavity of the associated log-leaders.
\color{black}

\subsubsection{Log-Concave Maximum Likelihood Estimator (MLE-LC)}

Let $X_1, \dots, X_n$ be independent and identically distributed (i.i.d.) random variables with density $f_0$ on $\mathbb{R}^d$ that we want to estimate, with $n \geq d+1$.  The non-parametric MLE, denoted as $\widehat{f}_n^{\mathrm{MLE-LC}}$, is defined as
\begin{equation}\label{Eq: MLE-LC def}
\widehat f_n=\widehat f_n^{\mathrm{MLE-LC}}:=\argmax_{f\in\sF_{\mathrm{0}}}\prod_{i=1}^{n}f(X_i)=\argmax_{f\in\sF_{\mathrm{0}}}\log\Big(\prod_{i=1}^{n}f(X_i)\Big)
\end{equation}
where, as a reminder, $\sF_{0}$ denotes the class of log-concave densities on $\mathbb{R}^d$. \\
Figure \ref{fn_chap} presents three examples of $\log(\widehat f_n)$  computed from the random variables $\log(\ell_{4,k})$ derived from two physiological data heart rate and speed, as well as the multifractal random walk process (MRW). It can be clearly observed that the maximum likelihood estimator $\widehat f_n$ is indeed log-concave.
\noindent
The statistical properties of the estimator are studied in Cule and Samworth (2010) \cite{Cule_Samworth_2010}. Importantly, their key finding does not rely on the assumption that the underlying density $f_0$ is log-concave. This is particularly noteworthy since determining the log-concavity of $f_0$ from a data sample is practically impossible. Thus, it becomes imperative to ensure that the estimator behaves reasonably even when this condition is not met.\\
\noindent
We first recall that the Kullback-Leibler 
divergence of a density $f$ from $f_0$ is given by
\begin{equation*}
d_{\mathrm{KL}}(f_0,f) =  \displaystyle \int_{\mathbb{R}^d} f_0 \log\Big(\dfrac{f_0}{f}\Big)d\mu,
\end{equation*}
where $\mu$ is the Lebesgue measure on $\R^d$.
Let $E$ be the support of $f_0$, i.e. the smallest closed set such that $\int_{E} f_0d\mu =1$, denoting $\mathrm{int}(E)$ the interior of $E$ (the largest open set contained in $E$) and $\log_+(x) = \max \lbrace  \log(x) ,0 \rbrace$. The following theorem  provides a key result to assess the properties of the log-concave maximum likelihood estimator.
\begin{figure}[!htbp]
\centering
 \includegraphics[width=0.3\textwidth]{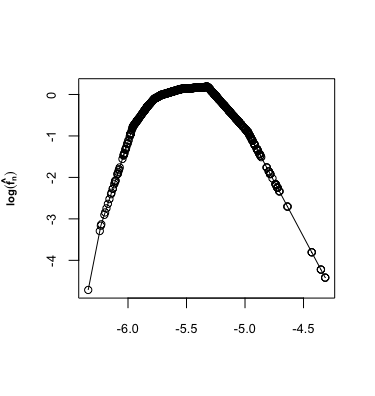}
  \includegraphics[width=0.3\textwidth]{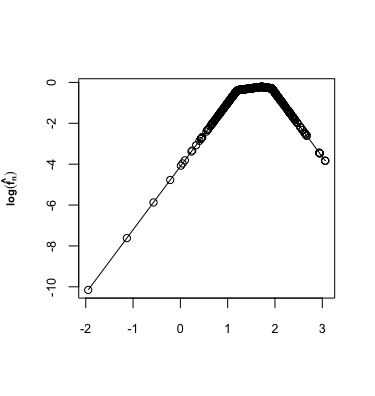}
  \includegraphics[width=0.3\textwidth]{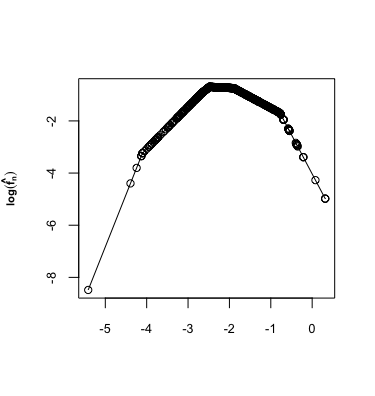}
\caption{$\log(\widehat f_n)$ for $\log(\ell_{4,k})$ of multifractal random walk (Left), heart rate (Middle) and speed (Right).}
\label{fn_chap}
\end{figure}
\begin{theorem}\label{theo}(\cite{Cule_Samworth_2010}, Theorem 4)
Let $f_0$ be any density on $\mathbb{R}^d$ satisfying \[ \int_{\mathbb{R}^d} \Vert x \Vert f_0(x) dx  < \infty,  \quad  \int_{\mathbb{R}^d} f_0 \log_+(f_0)d\mu  < \infty \quad \text{and}\quad \mathrm{int}(E)\neq \emptyset. \] There exists a log-concave density $f^*$, unique almost everywhere, such that
\begin{equation}\label{Eq: MLE-LC fstar}
f^*=\argmin_{f\in\mathscr{F}_0}d_{\mathrm{KL}}(f_0,f),
\end{equation}
i.e., that minimizes the Kullback-Leibler divergence of $f$ from $f_0$ over all log-concave densities $f$. \\
Let $a_0>0$ and $b_0 \in \mathbb{R} $ such that $ f^*(x) \leq \exp(-a_0 \Vert x \Vert +b_0)$; then,  for any $a<a_0$, 
\begin{equation}\label{Eq: MLCE-LC weights}
\displaystyle \int_{\mathbb{R}^d} \exp( a \Vert x \Vert) \  \vert \widehat{f}_n(x) - f^*(x) \vert dx  \rightarrow 0   \ \ \ almost \  surely \ as \ n \rightarrow \infty,
\end{equation}
and, if $f^*$ is continuous, then 
\begin{equation}\label{Eq: MLCE-LC consistent}
\underset{x \in \mathbb{R}^d}{\sup} \  \exp( a \Vert x \Vert) \  \vert \widehat{f}_n(x) - f^*(x) \vert dx  \rightarrow 0 \quad \mbox{  almost surely as } \quad n \rightarrow \infty.
\end{equation} 
\end{theorem}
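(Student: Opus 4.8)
The plan is to recast the estimation as an unconstrained concave maximisation over log-densities and then run an M-estimation consistency argument comparing the empirical and population objectives. Writing each $f\in\sF_0$ as $f=e^{\phi}$ with $\phi$ upper semicontinuous and concave (equal to $-\infty$ off its support), I would work with the relaxed functional $\Psi_{P}(\phi)=\int\phi\,dP-\int_{\R^d}e^{\phi}\,d\mu$ associated with a probability measure $P$ on $\R^d$. A first-order perturbation $\phi\mapsto\phi+t$ shows that any maximiser of $\Psi_P$ over all concave $\phi$ automatically satisfies $\int e^{\phi}\,d\mu=1$, so the normalisation constraint can be dropped. Writing $P_n$ for the empirical measure of $X_1,\dots,X_n$ and $P_0$ for the law with density $f_0$, the estimator of \eqref{Eq: MLE-LC def} is $\widehat f_n=e^{\widehat\phi_n}$ with $\widehat\phi_n=\argmax\Psi_{P_n}$, while the identity $d_{\mathrm{KL}}(f_0,e^{\phi})=\int_{\R^d}f_0\log f_0\,d\mu-\Psi_{P_0}(\phi)-1$ (valid on the constraint set) identifies the target $f^*$ of \eqref{Eq: MLE-LC fstar} as $e^{\phi^*}$ with $\phi^*=\argmax\Psi_{P_0}$; the hypothesis $\int f_0\log_+f_0\,d\mu<\infty$ ensures the constant term here is not $+\infty$.

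First I would settle existence and uniqueness of $f^*$. Strict concavity of $t\mapsto e^{t}$ forces any two maximisers of $\Psi_{P_0}$ to coincide $\mu$-almost everywhere on the set where they are positive, which gives uniqueness up to null sets. Existence uses the two integrability hypotheses directly: the moment bound $\int\|x\|f_0\,dx<\infty$ prevents the slopes of a maximising sequence of concave functions from diverging, while $\int f_0\log_+f_0\,d\mu<\infty$ keeps $\Psi_{P_0}$ bounded above, and $\mathrm{int}(E)\neq\emptyset$ rules out degenerate lower-dimensional optima. A compactness argument for uniformly Lipschitz concave functions together with upper semicontinuity of $\Psi_{P_0}$ then produces $\phi^*$.

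The convergence in \eqref{Eq: MLCE-LC weights} \emph{without} the weight would follow from a uniform law of large numbers and an argmax argument. I would first prove tightness of the fitted log-densities: since $\int\|x\|\,dP_n\to\int\|x\|f_0\,dx$ and $P_n(K)\to P_0(K)$ on compacta almost surely, the $\widehat\phi_n$ eventually have, almost surely, uniformly bounded peak values and uniformly bounded slopes, confining them to a fixed class $\mathcal C$ of concave functions. On $\mathcal C$ one has $\sup_{\phi\in\mathcal C}|\Psi_{P_n}(\phi)-\Psi_{P_0}(\phi)|\to0$ almost surely, a Glivenko--Cantelli statement stemming from the finite bracketing entropy of uniformly Lipschitz concave functions on compact sets. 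Any subsequential limit $\bar\phi$ of the $\widehat\phi_n$ then satisfies $\Psi_{P_0}(\bar\phi)\ge\Psi_{P_0}(\phi^*)$, so by uniqueness $\bar\phi=\phi^*$ a.e., and Scheffé's lemma upgrades the resulting pointwise convergence to unweighted $L^1$ convergence $\widehat f_n\to f^*$.

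The main obstacle is installing the exponential weight, that is, passing from unweighted $L^1$ convergence to \eqref{Eq: MLCE-LC weights} and \eqref{Eq: MLCE-LC consistent}. The crucial ingredient is a uniform tail bound for the estimator: for each $a<a_0$ there should exist $a_1\in(a,a_0)$, a constant $b_1$ and an index $N$ with $\widehat f_n(x)\le\exp(-a_1\|x\|+b_1)$ for all $x$ and all $n\ge N$, almost surely. Log-concavity squeezes $\widehat f_n$ below an exponential whose decay rate is governed by its peak value and its spread; the peak is controlled as in the tightness step, while the spread is controlled because the log-concave MLE matches the empirical first moment, $\int_{\R^d}x\,\widehat f_n(x)\,dx=\frac1n\sum_{i=1}^n X_i$, which converges. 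This uniformity in $n$ is the delicate part, and it is exactly where the first-moment hypothesis is used most sharply, to prevent mass from leaking into the tails of the fitted densities. Given this bound and the stated inequality $f^*(x)\le\exp(-a_0\|x\|+b_0)$, the integrand $e^{a\|x\|}|\widehat f_n-f^*|$ admits an integrable envelope and dominated convergence promotes the $L^1$ convergence to \eqref{Eq: MLCE-LC weights}. Finally, when $f^*$ is continuous, pointwise convergence of log-concave densities is automatically locally uniform on $\mathrm{int}(E)$; combining this with the uniform exponential tail bound and the weight $e^{a\|x\|}$ yields the global weighted uniform convergence \eqref{Eq: MLCE-LC consistent}.
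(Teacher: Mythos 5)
First, a point of comparison: the paper does not actually prove this statement — it is imported verbatim from Cule and Samworth (2010, Theorem 4) — so there is no in-paper argument to measure yours against. Your sketch does follow the architecture of the original proof: the relaxed Lagrangian functional $\Psi_P(\phi)=\int\phi\,dP-\int e^{\phi}\,d\mu$ whose maximisers are automatically normalised, uniqueness of $f^*$ from strict convexity of $t\mapsto e^{t}$, existence by compactness of locally Lipschitz concave functions under the first-moment and $\mathrm{int}(E)\neq\emptyset$ hypotheses, an argmax-consistency step yielding unweighted $L^1$ convergence, and finally an upgrade to the exponentially weighted norms. Most of these steps are sound at the level of a sketch.

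The genuine gap is in the mechanism you propose for the uniform exponential envelope on $\widehat f_n$, which is the step you yourself flag as delicate. Controlling the peak value of $\widehat f_n$ together with its first moment (via the moment-matching property of the log-concave MLE) does force \emph{some} exponential tail bound, but the decay rate it delivers is dictated by those two quantities alone; it cannot produce a rate $a_1\in(a,a_0)$ when $a$ is taken arbitrarily close to $a_0$ and $f^*$ has very light tails (large $a_0$). Since \eqref{Eq: MLCE-LC weights} is asserted for \emph{every} $a<a_0$, your dominated-convergence step requires an envelope $\widehat f_n(x)\le \exp(-a_1\Vert x\Vert+b_1)$ with $a_1>a$, and moment bounds alone do not supply it. The correct route (Proposition 2 of Cule and Samworth, in the spirit of Schuhmacher's lemma on limits of log-concave sequences) derives the envelope from the pointwise convergence $\widehat f_n\to f^*$ that you have already established: once $\log\widehat f_n$ is close to $\log f^*$ at the vertices of a suitable simplex contained in $\mathrm{int}(E)$, concavity of $\log\widehat f_n$ combined with the assumed bound $f^*(x)\le\exp(-a_0\Vert x\Vert+b_0)$ forces $\widehat f_n(x)\le\exp(-a_1\Vert x\Vert+b_1)$ globally for all large $n$, with $a_1$ as close to $a_0$ as desired. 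With that substitution, the remainder of your outline (dominated convergence for the weighted $L^1$ statement, and local uniformity of pointwise convergence of concave functions plus the tail envelope for the weighted sup-norm statement) goes through.
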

\noindent
We can interpret the results of this theorem as follows: when $f_0$ is log-concave, \eqref{Eq: MLCE-LC weights} means that the log-concave maximum likelihood estimator $\widehat{f}_n$ is strongly consistent within specific exponentially weighted total variation metrics, and the convergence is ensured also if $f_0=f^*$ is continuous (see \eqref{Eq: MLCE-LC consistent}). The convergence in these exponentially weighted norms provides confidence in the estimator's performance in the extreme tails of the density.
When $f_0$ is not log-concave,  their exists a unique log-concave density $f^*\neq f_0$ defined by \eqref{Eq: MLE-LC fstar}, i.e. that minimizes the Kullback-Leibler divergence from $f_0$. Furthermore, the log-concave maximum likelihood estimator $\widehat{f}_n$ converges in the same manner to $f^*$.
\subsubsection{The log-concavity test}
In  \cite{cule2010maximum}, Cule, Samworth, and Stewart investigate the test \
\begin{equation*}
\cH_0: f \;\text{is log-concave }\quad \text{vs} \quad \cH_1: f \;\text{is not log-concave.}
\end{equation*}
To address this general problem, they devise a permutation test based on the log-concave maximum likelihood estimator described above. We briefly describe its principle before outlining the steps of the corresponding algorithm.
\\
Consider $n$ random variables $X_1, \dots, X_n$ i.i.d. with density $f_0$ that we want to test for log-concavity. The random sample is denoted $\mathcal{X} = (X_1, \dots, X_n)$.\\
\noindent
\underline{\textbf{Test Principle}}: The test is based on the Log-Concave Maximum Likelihood Estimator (MLE-LC) defined by \eqref{Eq: MLE-LC def}. The underlying idea is the following:
\begin{itemize}
\item If $f_0$ is truly log-concave, i.e. \textbf{$\mathcal{H}_0$ true}, then $\widehat{f}_n$ applied to the data $X_1, \dots, X_n$ provides a good estimation of $f_0$. The test is interpreted as follows: if $d(f_0, \widehat{f}_n)$ is \textit{small}, \textbf{we keep $\mathcal{H}_0$.}
\item If $f_0$ is not log-concave, i.e. \textbf{$\mathcal{H}_1$ true}, then $\widehat{f}_n$ applied to the data $X_1, \dots, X_n$ converges to some $f^*\neq f_0$ defined by \eqref{Eq: MLE-LC fstar} and does not provide a good estimation of $f_0$. \
The test is interpreted as follows: if $d(f_0, \widehat{f}_n)$ is \textit{large}, \textbf{we reject $\mathcal{H}_0$}.
\end{itemize}
\noindent
\underline{\textbf{Test Algorithm}}:
\begin{enumerate}
\item \textbf{Fit the MLE-LC} to the data/sample $\mathcal{X} = (X_1, \dots, X_n)$ by using the function \textcolor{blue}{\textbf{mlelcd}} from the package \textcolor{blue}{\textbf{LogConcDEAD}}. This provides a probability density $\widehat{f}_n$.
\item \textbf{Generate} a sample $\mathcal{X}^{\star} = X_1^\star, \dots, X_n^\star$ from $\widehat{f}_n$ by using the function \textcolor{blue}{\textbf{rlcd}} from the package \textcolor{blue}{\textbf{LogConcDEAD}}. This provides $n$ values $x_1^\star, \dots, x_n^\star$, which are realizations of $n$ random variables $X_1^\star, \dots, X_n^\star$ with density $\widehat{f}_n$.
\item \textbf{Compare} the samples $\mathcal{X} = (X_1, \dots, X_n)$ and $\mathcal{X}^\star = (X_1^\star, \dots, X_n^\star)$ by means of the distance
\begin{equation*}
T=\sup_{A\in\cA_0}|P_n(A)-P_n^\star(A)|,
\end{equation*}
where $\mathcal{A}_0$ is the set of balls centered at a point in $\mathcal{X} \cup \mathcal{X}^\star$ and $P_n$ is the empirical distribution associated with $\mathcal{X}$

\begin{equation*}
P_n(A)=\frac{1}{n}\sum_{i=1}^n\delta_{X_i}(A).
\end{equation*}

In the one-dimensional case  with realizations $x_1, \dots, x_n$ of $X_1, \dots, X_n$ and $x_1^\star, \dots, x_n^\star$ of $X_1^\star, \dots, X_n^\star$, we consider the test statistic 

\begin{equation}\label{Eq_Stat_test}
T=\sup_{r>0}\sup_{x\in\{x_1,\dots,x_n,x_1^\star,\dots,x_n^\star\}}\bigg|\frac{1}{n}\sum_{i=1}^n\delta_{X_i}\big((x-r,x+r)\big)-\delta_{X_i^\star}\big((x-r,x+r)\big)\bigg|.
\end{equation}

\item \textbf{"Shuffle the stars"} to increase the test power. Choose a permutation $\pi$ (uniform random) of ${1, \dots, 2n}$ to mix the elements of $\mathcal{X} \cup \mathcal{X}^\star$ and place stars on the last $n$ elements of the obtained sample
\begin{equation*}
x_1,\dots,x_n,x_1^\star,\dots,x_n^\star\xrightarrow{\pi} x_{\pi(1)},\dots,x_{\pi(n)},x_{\pi(n+1)},\dots,x_{\pi(2n)}=:x_{1,1},\dots,x_{n,1},x_{1,1}^\star,\dots,x_{n,1}^\star.
\end{equation*}
Compute as above
\begin{equation*}
T_1^\star=\sup_{A\in\cA_0}|P_{n,1}(A)-P_{n,1}^\star(A)|,
\end{equation*}
where
\begin{equation*}
P_{n,1}(A)=\frac{1}{n}\sum_{i=1}^n\delta_{X_{i,1}}(A)\quad \text{and}\quad P_{n,1}^\star(A)=\frac{1}{n}\sum_{i=1}^n\delta_{X_{i,1}^\star}(A).
\end{equation*}
\item \textbf{Repeat the procedure $B-1$ times}: We obtain $T_1^\star, \dots, T_B^\star$. Then, we consider the corresponding order statistics  $T_{(1)}^\star \leq \dots \leq T_{(B)}^\star$.
\item For a level of significance $\alpha$, we \textbf{reject $\mathcal{H}_0$ if $T > T^\star_{(B+1)(1-\alpha)}$}. The test is given by:
\begin{itemize}
\item the test statistic $T$ defined by \eqref{Eq_Stat_test},
\item the rejection region is $T^\star_{(B+1)(1-\alpha)}$.
\end{itemize}
\end{enumerate}

\subsubsection{Efficiency of the log-concavity test}
To illustrate the performance of the log-concavity test, we compare its results with theoretical expectations using distributions with known properties. Note that while mixtures of log-concave densities can sometimes be log-concave, they generally are not. We use a mixture of standard univariate normal densities $f:x\mapsto\frac{1}{2} \phi(x)+\frac{1}{2} \phi(x- c)$ where $\phi$ denotes the standard normal density and $\Vert c \Vert \in \lbrace 0,2,4 \rbrace$. This mixture is log-concave if and only if $\Vert c \Vert \leq 2$. We also examine several parametric families of univariate distributions with log-concave densities, including normal, Gumbel, Laplace, logistic, gamma, and Weibull distributions with a shape parameter of at least one. Additionally, we test the generalized Gaussian distribution, which is log-concave if and only if the shape parameter $\beta $ satisfies $\beta >1$. For comparison, we include non-log-concave densities such as Cauchy, Pareto, and lognormal. The log-concavity test was performed on these distributions with a sample size of $n=200$ and for the log-normal density we used different sizes $n=200, 500$ and $1000$. The test statistic is the proportion of rejections out of 100 repetitions, with $B=99$ and the null hypothesis was rejected with a level of significance $\alpha=0.05$. The results are summarized in Table \ref{test2} below.\\

\begin{table}[h!]
\centering
\caption{\label{Tab: log-oncave}Empirical rejection probabilities of the log-concavity test (see Subsection \ref{Subsec: logconcave}) applied to the usual distribution densities. The sample size is $n=200$. The rejection proportions lead to the decision (third column) to either retain (LC) the null hypothesis $\mathcal{H}_0$ that the density is log-concave, to reject it (non-LC), or to neither reject nor retain $\mathcal{H}_0$ (inconclusive). The last column indicates whether (based on known theoretical results) the density is truly log-concave or not.}
\label{test2}
\begin{tabular}{@{}llccc@{}}
\toprule
\multicolumn{2}{c}{\textbf{Density}} 
& \textbf{Prop. of reject $\mathcal{H}_0$} 
& \textbf{Decision} 
& \textbf{Theoretical result} \\
\midrule

\multirow{3}{*}{$f(x) = \dfrac{1}{2} \phi(x) + \dfrac{1}{2} \phi(x - c)$} 
  & $c = 0$ & 0.03  & LC          & LC \\
  & $c = 2$ & 0.02  & LC          & LC \\
  & $c = 4$ & 0.29  & Inconclusive & Non LC \\
\midrule

\multicolumn{2}{l}{Gumbel: $f(x,1,1) = e^{e^{1 - x}}$} 
& 0.02  & LC  & LC \\
\midrule

\multicolumn{2}{l}{Laplace (0,1): $f(x) = \dfrac{1}{2} e^{-\vert x \vert}$} 
& 0     & LC  & LC \\
\midrule

\multicolumn{2}{l}{Logistic: $f(x,0,1) = \dfrac{e^{-x}}{(1 + e^{-x})^2}$} 
& 0     & LC  & LC \\
\midrule

\multicolumn{2}{l}{Gamma: $f(x,1,2) = \dfrac{1}{2\Gamma(1)} e^{-x/2}$} 
& 0.01  & LC  & LC \\
\midrule

\multicolumn{2}{l}{Weibull: $f(x,2,2) = \dfrac{x}{2} e^{-x^2 / 4}$} 
& 0.01  & LC  & LC \\
\midrule

\multicolumn{2}{l}{Cauchy: $f(x,0,1) = \dfrac{1}{\pi} \dfrac{1}{x^2 + 1}$} 
& 1     & Non LC & Non LC \\
\midrule

\multicolumn{2}{l}{Pareto: $f(x,1,1) = \dfrac{1}{x^2}$} 
& 1     & Non LC & Non LC \\
\midrule

\multirow{2}{*}{Generalized Gaussian: $f(x) = \dfrac{\beta e^{-\vert x \vert^\beta}}{2 \Gamma(1/\beta)}$}
  & $\beta = 0.5$  & 0.93 & Non LC & Non LC \\
  & $\beta = 1.5$  & 0    & LC      & LC \\
\midrule

\multirow{2}{*}{Lognormal: $f(x) = \dfrac{1}{x \sqrt{2\pi}} e^{- \dfrac{\log^2(x)}{2\sigma^2}}$}
  & $n = 200$  & 0.06  & \textcolor{red}{\textbf{LC}   }      & \textcolor{red}{\textbf{Non LC} }\\
  & $n = 1000$ & 0.37  & Inconclusive & Non LC \vspace{2.2mm}\\
\bottomrule
\end{tabular}
\vspace{2mm}
\textbf{\underline{Table legend}}: LC: log-concave, Non LC: non log-concave.
\end{table}
\noindent
In light of the results reported in Table \ref{Tab: log-oncave}, it is clear that the test is effective and, in most cases, the outcomes align  with the nature of each distribution. 
When the decisions to reject or not reject the null hypothesis $\mathcal H_0$ do not align with whether the underlying density is truly log-concave or not, it is worth noting that the test can still detect a significant departure from log-concavity. In this case, the rejection rate is too high ($>\!>5\%$) for us to retain $\mathcal{H}_0$ with confidence, which leads to the decision to neither reject nor retain $\mathcal{H}_0$ (indicated in the table as \emph{inconclusive}). This occurs in the log-concave case with $n=1000$, where a larger sample allows for contradicting the incorrect 'LC' decision obtained with $n=200$ test runs. Similarly, in the case of a mixture with $c=4$, the high rejection rate suggests running the test on a larger sample (see experiments with different sample sizes in Cule and Samworth \cite{Cule_Samworth_2010}) to refine the result and ultimately reject $\mathcal{H}_0$.
\subsubsection{Application of  the log-concavity test to log-leaders}\label{Subsec: logconcave test}

\begin{table}[h!]
\centering
\caption{Log-concavity test on log-leaders $\log(\ell_{j,k})$, where $j \in \{ 4, 5, 6 \}$, for different processes and real data. The test result shows the proportion of times out of 100 repetitions that the null hypothesis $\mathcal{H}_0$ was rejected.}
\label{test}
\begin{tabular}{@{}llccc@{}}
\toprule
\multicolumn{2}{c}{\textbf{$X$}} 
& $\log(\ell_{4,k})$ 
& $\log(\ell_{5,k})$ 
& $\log(\ell_{6,k})$ \\
\midrule

\multirow{2}{*}{fBm} 
  & $H = 0.4$ & 0.01 & 0.18 & 0.02 \\
  & $H = 0.7$ & 0.00 & 0.13 & 0.02 \\
\midrule

\multicolumn{2}{l}{CMC-motion}  & 0.01 & 0.02 & 0.02 \\
\midrule
\multicolumn{2}{l}{CPM-LN}      & 0.05 & 0.05 & 0.02 \\
\midrule
\multicolumn{2}{l}{CPM-LP}      & 0.00 & 0.00 & 0.00 \\
\midrule
\multicolumn{2}{l}{MRW}         & 0.02 & 0.04 & 0.03 \\
\midrule

\multicolumn{2}{l}{Heart rate}  & 0.08 & 0.04 & 0.04 \\
\midrule
\multicolumn{2}{l}{Speed}       & 0.17 & 0.06 & 0.02 \\
\bottomrule
\end{tabular}
\end{table}
\noindent
We use the same processes and real data as in Section \ref{Qplot} to perform the log-concavity test on log-leaders. The results in Table \ref{test} clearly show that the proportion of times the null hypothesis was rejected out of 100 repetitions is very low. This confirms that the log-concavity of the distribution of log-leaders is a realistic hypothesis. This is an encouraging result, especially when working with real data where no theoretical properties are known.
Given the practical implications of this finding, it would be valuable to validate this hypothesis across a broader range of real life datasets.
Testing the log-concavity of log-leaders on different types of  data, drawn from various fields such as finance, biology, or physics, would not only strengthen the generalizability of this hypothesis but also reveal its potential limitations.
\\
A relevant question to address is the effect of changing the mother wavelet $\psi$ on the results of the log-concavity test of the log-leaders. 
To explore this, using the example of the Multifractal Random Walk (with the same parameters as in Section \ref{Qplot}), we choose to apply the test with two wavelet bases that differ significantly in both regularity and support size, specifically Daubechies wavelets with varying numbers of vanishing moments: $N_\psi = 1$ corresponding to the Haar wavelet, and $N_\psi = 4$. Notably, a smaller $N_\psi$ corresponds to a more irregular wavelet with smaller support. The results of the log-concavity test on the log-leaders, shown in Table \ref{vanishing}, indicate that the choice of wavelet does not affect the test outcomes. Indeed, the proportion of times the null hypothesis was rejected across 100 repetitions remains consistently low, reinforcing the potential validity of the log-concavity hypothesis for the distribution of log-leaders. This example indicates that the test results are robust to changes in the choice of the wavelet basis. In other words, the log-leaders appear to maintain their log-concavity regardless of the wavelet basis's characteristics, such as its regularity and support. However, while this result of robustness is promising, it is only based on numerical evidence and  a  thorough theoretical study would be  needed to support these findings in all generality.

\begin{table}[!htbp]
\centering
\caption{Log-concavity test on log-leaders $\log(\ell_{j,k})$, where $j \in \{4, 5, 6\}$, for a multifractal random walk (MRW) using Daubechies wavelets with different vanishing moments $N_\psi = 1$ and $N_\psi = 4$. The test result shows the proportion of times out of 100 repetitions that the null hypothesis $\mathcal{H}_0$ was rejected.}
\label{vanishing}
\begin{tabular}{@{}llccc@{}}
\toprule
\multicolumn{2}{c}{\textbf{Wavelet}} 
& $\log(\ell_{4,k})$ 
& $\log(\ell_{5,k})$ 
& $\log(\ell_{6,k})$ \\
\midrule

\multicolumn{2}{l}{$N_\psi = 1$} & 0.01 & 0.00 & 0.03 \\
\midrule
\multicolumn{2}{l}{$N_\psi = 4$} & 0.01 & 0.12 & 0.00 \\
\bottomrule
\end{tabular}
\end{table}
\section{Estimation of multifractality parameters}\label{Sec: estmation c1 c2}
\noindent
In this section, we focus on estimating two important multifractality parameters: $c_1$ and  $c_2$ (we leave the statistical estimation of the third major multifractality parameter  and $H^{min}_f$ for the forthcoming article \cite{BNHHSJ2}). An estimator for the coefficients $c_1$ and $c_2$ was introduced by Jacyna et al. \cite{jacyna_2023_improved} and comes with theoretical guarantees through confidence intervals. We examine the assumptions underlying its applicability by leveraging the analysis of log-leader distributions presented in Section \ref{Sec: Stat model}.
\subsection{Estimation of $c_1$ and $c_2$}\label{Subsec: estimation c1 c2}
In this section, we revisit the estimation procedure for $c_1$ and $c_2$ developed in Jacyna et al. \cite{jacyna_2023_improved}, questioning the assumptions of its application. Consider a process $X=(X(t))_{t\in\R_+}\in L_{\mathrm{loc}}^{\infty}(\R^d)$ whose wavelet coefficients are defined by \eqref{Eq: wavelet coeff} and wavelet leaders by \eqref{defwl}.
Recalling \eqref{Eq: def cj} the coefficients $c_1$ and $c_2$ are defined by the equations
\begin{equation}\label{Eq: def cj2}
C_1(j) = c_{0,1} + c_1 \log(2^{-j})\quad\text{and}\quad C_2(j) = c_{0,2} + c_2 \log(2^{-j}). 
\end{equation}
where for $m\in\{1,2\}$, $C_m(j) $ denotes the $m$-th order cumulant of the random variables $\log(\ell_{ j,k})$. Consider $N$ i.i.d. copies $X^1,\dots,X^N$ of the process $X$ and $M$ scales $j_1,\dots,j_M$. This leads to define the estimates $(\widehat c_{0,1}^{(i)},\widehat c_{1}^{(i)})$ and $(\widehat c_{0,2}^{(i)},\widehat c_{2}^{(i)})$ by the equations
\begin{align}\label{Eq: hat c1 hat c2}
\widehat c_{0,1}^{(i)} + \widehat c_{1}^{(i)} \log(2^{-j}) 
&= \frac{1}{n_j}\sum_{k=1}^{n_j} \log(\ell_{j,k}^{(i)}) 
=:\widehat\mu_j^{(i)} \nonumber \\
\text{and} \quad 
\widehat c_{0,2}^{(i)} + \widehat c_{2}^{(i)} \log(2^{-j}) 
&= \frac{1}{n_j} \sum_{k=1}^{n_j} \left( \log(\ell_{j,k}^{(i)}) - \widehat\mu_j^{(i)} \right)^2 
=:\widehat\sigma_j^{(i)}.
\end{align}

where, for any $j\in\{j_1,\dots,j_M\}$, $n_j$ stands for the number of translations $k$ at scale $j$. For any $i\in\{1,\dots,N\}$, we define the vectors $\bsb{\widehat{\mu}}^{(i)}=(\widehat{\mu}_{j_1}^{(i)},\dots,\widehat{\mu}_{j_M}^{(i)})$ and $\bsb{\widehat{\sigma}}^{(i)}=(\widehat{\sigma}_{j_1}^{(i)},\dots,\widehat{\sigma}_{j_M}^{(i)})$. 
The least-squares estimates $\bsb{\widehat c}_1^{(i)}:=(\widehat c_{0,1}^{(i)},\widehat c_{1}^{(i)})$ and $\bsb{\widehat c}_2^{(i)}:=(\widehat c_{0,2}^{(i)},\widehat c_{2}^{(i)})$ can be rewritten as
\begin{equation}\label{Eq: estimators c1i c2i}
\bsb{\widehat c}_1^{(i)}=(\widehat c_{0,1}^{(i)},\widehat c_{1}^{(i)}):=(\mathbf{H}^\top\mathbf{H})^{-1}\mathbf{H}^\top\widehat{\bsb{\mu}}^{(i)} \quad\text{and}\quad \bsb{\widehat c}_2^{(i)}=(\widehat c_{0,2}^{(i)},\widehat c_{2}^{(i)}):=(\mathbf{H}^\top\mathbf{H})^{-1}\mathbf{H}^\top\widehat{\bsb{\sigma}}^{(i)}, 
\end{equation}
where the matrix $\bf{H}$ is given by
\begin{equation*}
\bf{H}=
\begin{pmatrix}
1&\log(2^{-j_1})\\
1&\log(2^{-j_2})\\
\vdots&\vdots\\
1&\log(2^{-j_M})\\
\end{pmatrix}.
\end{equation*}
We consider then the estimators
\begin{align}\label{Eq: estimators c1 c2}
\bsb{\widehat c}_1 
&:= (\widehat c_{0,1},\widehat c_{1}) 
:= \frac{1}{N}\sum_{i=1}^N\bsb{\widehat c}_1^{(i)} 
= \frac{1}{N}\sum_{i=1}^N(\mathbf{H}^\top\mathbf{H})^{-1}(\mathbf{H}^\top \widehat{\bsb{\mu}}^{(i)}) \nonumber \\
\text{and} \quad
\bsb{\widehat c}_2 
&:= (\widehat c_{0,2},\widehat c_{2}) 
:= \frac{1}{N-1}\sum_{i=1}^N\bsb{\widehat c}_2^{(i)} 
= \frac{1}{N-1}\sum_{i=1}^N(\mathbf{H}^\top\mathbf{H})^{-1}\mathbf{H}^\top\widehat{\bsb{\sigma}}^{(i)}.
\end{align}

By definition, $C_1(j) = \mathbb{E}[\log(\ell_{j,k})] =: m_j$ for all $j \in \{j_1, \dots, j_M\}$, and letting $\mathbf{m} = (m_1, \dots, m_M)$, it follows that $\mathbb{E}[\bsb{\widehat{c}}_1] = (\mathbf{H}^\top \mathbf{H})^{-1} \mathbf{H}^\top \mathbf{m} = (c_{0,1},c_1)=:\bsb{c}_1$. Therefore, $\bsb{\widehat{c}}_1$ is an unbiased estimator of $\bsb{c}_1:=(c_1,\dots,c_1)\in\R^N$. A similar argument shows that $\bsb{\widehat{c}}_2$ is an unbiased estimator of $\bsb{c}_2$. \\
Expanding \eqref{Eq: estimators c1i c2i}, we get
\begin{equation}\label{Eq: expansion c1}
\widehat c_1^{(i)}=\sum_{k=1}^{M}\big(\bsb{h}_{2,1}+\bsb{h}_{2,2}\log(2^{-j_k})\big)\widehat \mu_{j_k}^{(i)}
\end{equation}
by writing $(\mathbf{H^\top H})^{-1}=(\bsb{h}_{i,j})_{(i,j)\in\{1,2\}^2}$. 
\\
Assuming that the log-leader distributions are log-concave (an hypothesis empirically validated in Section \ref{Sec: Stat model}), the random variables $\log(\ell_{j,k})$ have finite moments of all orders, including a finite variance. By the Central Limit Theorem (CLT),  
\begin{equation}\label{Eq: TCL}
Z_N:=\sqrt{N}\frac{(\widehat{c}_1 - c_1) }{\sigma_{c_1}}\xrightarrow[N \to \infty]{\mathcal{L}} \mathcal{N}(0, 1),
\end{equation}
where $\sigma_{c_1}$ is the common standard deviation of the $\widehat c_1^{(i)}$.
From this, we derive the confidence interval (valid for $N \geqslant 30$):
\begin{equation*}
c_1 \in \left[\widehat{c}_1 - z_{\alpha/2} \frac{\widehat{\sigma}_{{c}_1}}{\sqrt{N}}, \, \widehat{c}_1 + z_{\alpha/2} \frac{\widehat{\sigma}_{{c}_1}}{\sqrt{N}}\right],
\end{equation*}
where $\widehat{\sigma}_{{c}_1}$ is the empirical estimator of $\sigma_{c_1}$  and $z_{\alpha/2}$ is the $(1 - \alpha/2)$-quantile of the Normal distribution, corresponding to the desired confidence level $1 - \alpha$. The quality of the normal approximation provided by the Central Limit Theorem \eqref{Eq: TCL} can be assessed using the Berry-Esseen theorem, which states that
\begin{equation*}
\sup_{x\in\R}\big|\mathbb P(Z_N\leq x)-\Phi(x)\big|\leq \frac{C}{c_2^{3/2}N^{3/2}}\sum_{i=1}^N\mathbb E\big[|\widehat c_1^{(i)}-c_1|^3\big],
\end{equation*}
where $C$ is some constant smaller than $0.46$ (see \cite{shevtsova_2014_berry}). Note that by \eqref{Eq: expansion c1} we get
\begin{equation*}
\mathbb E\big[|\widehat c_1^{(i)}-c_1|^3\big]=\sum_{k=1}^{M}\big(\bsb{h}_{2,1}+\bsb{h}_{2,2}\log(2^{-j_k})\big)\mathbb E[|\widehat \mu_{j_k}^{(i)}-m_k|^3]<\infty,
\end{equation*}
since the third absolute moment is always finite due to the exponential (or faster) decay of the tails of log-concave distributions. We thus recover the estimation results and a confidence interval for the estimates of $c_1$ and $c_2$ stated in Jacyna et al. \cite{jacyna_2023_improved}, without relying on the assumption that the log-leaders form a stationary Gaussian field, as assumed in their work and refuted in Subsection \ref{Subsec: Shapiro test}.
\subsection{Numerical experiments}
\noindent
\textbf{\underline{Procedure:}}
We consider two types of processes with different multifractal properties (described in Section \ref{Qplot}). The first is fractional Brownian motion (fBm), a monofractal process, where the parameter $c_1$ coincides with the Hurst exponent $H$, and $c_2 = 0$. The second is the Multifractal Random Walk (MRW), a multifractal process, for which $c_1$ is given by $c_1 = H + \beta^2 / 2$ and $c_2 = -\beta^2$. For the numerical study, we simulate $N = 100$ realizations of each stochastic process, with a sample size of $1310$, uniformly sampled over the time interval $[0,1]$. Estimation is performed using Daubechies wavelets with $N_{\psi} = 3$ vanishing moments, and the choice of scales $(j_1, \dots, j_M)$ is described below.  \\
\noindent
\textbf{\underline{Tuning Parameter}}  
To describe the scale selection procedure, we consider $N = 100$ simulations of fBm with $H = 0.8$. Performing a multifractal analysis for each process, we examine the log-log regression used to estimate the parameter $h_{\min}$, which corresponds to the uniform regularity of the function in the scale provided by global H\"older spaces (see Figure \ref{scale}). Numerically, it is determined as the slope of the linear regression between the scales and the logarithm of the supremum of the wavelet coefficients at each scale. This choice of regression is justified as it provides a robust estimate of this classification parameter (see \cite{Jaffard2015}).  
\\
\noindent
We extract the range of scales that yields the best linear regression fit, based on the coefficient of determination, denoted $R^2$. The coefficient $R^2$ measures the goodness of fit of the regression, with values closer to 1 indicating a stronger linear relationship between the variables. It is defined as
\[
R^2 = 1 - \frac{\sum_{i} (y_i - \widehat{y}_i)^2}{\sum_{i} (y_i - \bar{y})^2},
\]
where  $y_i$ represents the observed values, $\widehat{y}_i$ the predicted values from the regression model, and $\bar{y}$ the mean of the observed values. After $N $ simulations, we select the most frequently occurring scale range $\{ j_1, \dots, j_M\}$ and use it for the estimation of the multifractal parameters $c_1 $ and $c_2 $, see Figure \ref{scale}. 
\begin{figure}[!htbp]
\centering
 \includegraphics[scale=0.3]{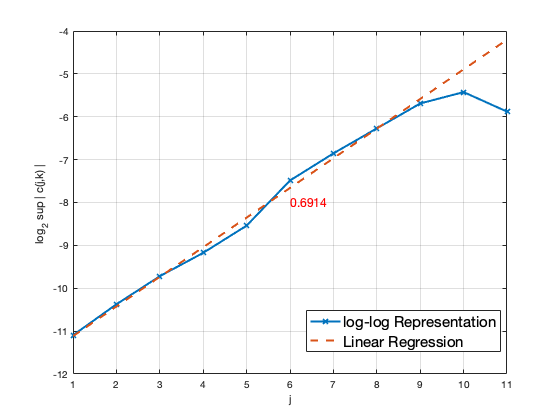}
  \hfill
  \includegraphics[scale=0.3]{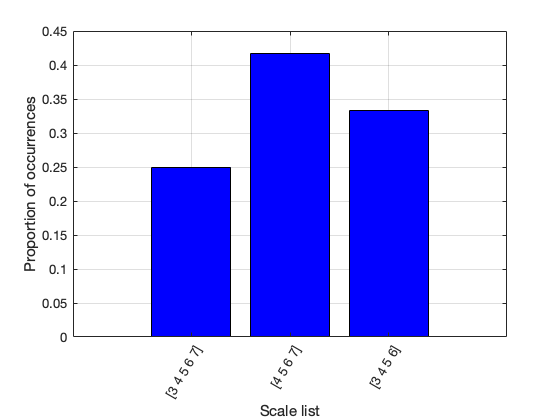}
\caption{Left, log-log plot illustrating the linear regression performed to estimate $h_{min}$ for a fractional Brownian motion with Hurst parameter $H = 0.8$. Right, histogram displaying the proportion of occurrences of the different scale intervals selected as yielding the optimal linear regression across $N = 100$ independent realizations.}
\label{scale}
\end{figure}
\\
\noindent
\textbf{\underline{Results and interpretation:}}
In Table \ref{IC}, we present the confidence intervals for the parameter $ c_1$ for different realizations of fBm and MRW, choosing $ H \in \{ 0.5, 0.6, 0.7, 0.8, 0.9 \}$ and setting $\beta = 0.1 $. In each case, we compare the confidence intervals ($95 \% $ confidence level, $\alpha=0.05$) obtained with those estimated using the percentile bootstrap method, see \cite{wendt2008contributions, wendt2007bootstrap, wendt2006bootstrap,wendt2006bootstrap2}. The percentile bootstrap method is a non-parametric approach for constructing confidence intervals without assuming a specific distribution for the data. It relies on resampling with replacement from the original dataset to generate $B$ bootstrap samples. For each sample, the parameter of interest is estimated, denoted by $\widehat{\theta}_b$. The empirical distribution of these bootstrap estimates is then used to determine the $\alpha$ quantiles $\widehat{\theta}_\alpha$ corresponding to the desired confidence level. Thus, the confidence interval at level $1-\alpha$ is 
\[
\text{CI}_{\theta, (1 - \alpha)}^{\text{per}} = \left[ \widehat{\theta}_{\frac{\alpha}{2}}, \widehat{\theta}_{1 - \frac{\alpha}{2}} \right].
\]
Thus, for the bootstrap method, we adopt a sample size of $131000$, ensuring that any differences in confidence interval widths are attributable to the intrinsic properties of the estimation techniques rather than disparities in sample size. Thus, the comparison of confidence intervals highlights that those obtained using our CLT-based estimation method generally have smaller widths than the bootstrap-derived intervals. This reduction in interval width suggests that our approach provides a more precise estimation of the parameter $c_1$, thereby enhancing statistical efficiency. While the bootstrap method tends to yield slightly wider intervals, reflecting a more conservative quantification of uncertainty, the difference remains moderate, ensuring that both methods offer consistent inference. The narrower confidence intervals obtained via the CLT-based approach indicate its potential advantage in terms of precision, while the bootstrap method remains a robust alternative, particularly useful in scenarios where parametric assumptions may not hold. These findings demonstrate the effectiveness of our method in providing refined estimates while maintaining coherence with an established resampling technique.
\noindent
Table \ref{IC-c2} presents the estimates of the $c_2 $ coefficient fBm and MRW models across different values of $H \in \{0.5, 0.6, 0.7, 0.8, 0.9\}$ and fixed $\beta = 0.1$. The theoretical value of $ c_2 $ is $0$ for fBm and $-0.01$ for MRW. Three estimation methods are compared: our  proposed CLT-based method $\widehat{c}_2$, the percentile bootstrap estimator $\widehat{c}_2^b$, and the Bayesian estimator $\widehat{c}_2^{\mathrm{BY}}$ which is based on the assumption of the normality of log-leaders, see \cite{combrexelle2015bayesian,Combrexelle2016EUSIPCO,wendt2013bayesian}. For both the fBm and MRW models, the CLT-based estimator consistently provides the most reliable estimates, yielding results that are closest to the theoretical values. This method demonstrates stability and precision, with minimal deviations across all values of the Hurst exponent. In contrast, the bootstrap estimator exhibits higher variability, leading to less accurate estimates. For instance, in the MRW model with $H = 0.7 $, the bootstrap estimator produces a value of  $\widehat{c}_2^b = -0.039 $, which deviates significantly from the theoretical value of $-0.01$. 
Meanwhile, the Bayesian estimator provides reasonably accurate results for MRW, but its estimates exhibit more fluctuation compared to the CLT-based method. 
These findings highlight the CLT-based method as the most accurate and stable for estimating $c_2$, consistently yielding values closest to the truth, especially for MRW. In contrast, the bootstrap and Bayesian methods show greater variability and slightly lower precision.
\\
\textbf{\underline{Conclusion:}} 
The CLT-based estimator offers comparable precision to other methods for estimating $c_1$ and $c_2$ but stands out for its weaker assumption-requiring only log-concavity of the log-leaders. Additionally, it is significantly faster, with a runtime of approximately 14 seconds, comparable to the Bayesian method, and nearly five times shorter than the bootstrap method, which takes about 1 minute and 10 seconds.
\begin{table}[h!]
\centering
\caption{Confidence intervals for the parameter $c_1$ for fractional Brownian motion (fBm) and Multifractal Random Walk (MRW) with the same sample size, considering different values of Hurst parameter $H \in \{0.5, 0.6, 0.7, 0.8, 0.9\}$ and fixed $\beta = 0.1$. Two estimation methods are compared: our proposed CLT-based method (CLT method) and the percentile bootstrap method (Bootstrap) with $B=100$ bootstrap samples, both at $95\%$ confidence level ($\alpha = 0.05$). The columns \textbf{LB} and \textbf{UB} denote the lower and upper bounds of the confidence intervals, respectively, while \textbf{UB-LB} represents the interval width.}
\label{IC}

\begin{tabular}{@{}lccc|ccc@{}}
\toprule
\multicolumn{7}{c}{\textbf{fBm}} \\
\midrule
Theoretical $c_1$ 
& \multicolumn{3}{c|}{CLT method} 
& \multicolumn{3}{c}{Bootstrap} \\
\cmidrule(r){2-4} \cmidrule(l){5-7}
  & LB & UB & UB-LB 
  & LB & UB & UB-LB \\
\midrule
0.5 & 0.4709 & 0.5071 & \textbf{0.0362} & 0.485  & 0.532  & 0.047 \\
0.6 & 0.5811 & 0.6076 & 0.0265          & 0.584  & 0.606  & \textbf{0.022} \\
0.7 & 0.6845 & 0.7161 & \textbf{0.0316} & 0.6808 & 0.7152 & 0.0344 \\
0.8 & 0.7842 & 0.8126 & \textbf{0.0284} & 0.773  & 0.823  & 0.05 \\
0.9 & 0.89   & 0.9197 & \textbf{0.0297} & 0.87   & 0.931  & 0.061 \\
\bottomrule
\end{tabular}

\vspace{0.5cm} 

\begin{tabular}{@{}lccc|ccc@{}}
\toprule
\multicolumn{7}{c}{\textbf{MRW}} \\
\midrule
Theoretical $c_1$ 
& \multicolumn{3}{c|}{CLT method} 
& \multicolumn{3}{c}{Bootstrap} \\
\cmidrule(r){2-4} \cmidrule(l){5-7}
  & LB & UB & UB-LB 
  & LB & UB & UB-LB \\
\midrule
0.505 & 0.4641 & 0.5052 & \textbf{0.0411} & 0.464 & 0.508 & 0.044 \\
0.605 & 0.5784 & 0.6055 & \textbf{0.0271} & 0.585 & 0.617 & 0.032 \\
0.705 & 0.681  & 0.7102 & \textbf{0.0292} & 0.696 & 0.734 & 0.038 \\
0.805 & 0.7938 & 0.8292 & \textbf{0.0354} & 0.8   & 0.85  & 0.05 \\
0.905 & 0.8693 & 0.9224 & \textbf{0.0531} & 0.88  & 0.937 & 0.057 \\
\bottomrule
\end{tabular}
\end{table}

\begin{table}[h!]
\caption{Estimates of the $c_2$ parameter for fractional Brownian motion (fBm) model ($c_2=0$) and the multifractal random walk (MRW) model ($c_2=-0.01$), for different values of Hurst exponent $H \in \{0.5, 0.6, 0.7, 0.8, 0.9\}$ and fixing $\beta = 0.1$. The estimates are obtained using our proposed CLT-based method ($\widehat{c}_2$), the percentile bootstrap method ($\widehat{c}_2^b$) (B=100), and the Bayesian method ($\widehat{c}_2^{\mathrm{BY}}$). Bold values indicate the negative estimates closest to the theoretical values.}
\label{IC-c2}
\begin{tabular*}{\textwidth}{@{\extracolsep\fill}cccc|cccc@{}}
\toprule
\multicolumn{4}{c|}{\textbf{fBm \hspace*{0.3cm} $c_2=0$}} & \multicolumn{4}{c}{\textbf{MRW \hspace*{0.3cm} $c_2=-0.01$}} \\
\midrule
H & $\widehat{c}_2$ & $\widehat{c}_2^b$ & $\widehat{c}_2^{\mathrm{BY}}$ & H & $\widehat{c}_2$ & $\widehat{c}_2^b$ & $\widehat{c}_2^{\mathrm{BY}}$ \\
\midrule
0.5 & -0.0118 & 0.008  & \textbf{-0.0036} & 0.5 & \textbf{-0.01} & -0.018  & -0.0127 \\
0.6 & \textbf{-0.0001} & -0.004  & -0.0051 &  0.6 & \textbf{-0.0084} & -0.017 & -0.0152 \\
0.7 & \textbf{-0.004} & -0.004  & -0.0049 & 0.7 & \textbf{-0.0113} & -0.039 & -0.0159 \\
0.8 & \textbf{-0.002} & 0.002 & -0.0066 & 0.8 & \textbf{-0.0155} & -0.018  & -0.0172 \\
0.9 & \textbf{-0.001} & 0.011 & -0.0074 & 0.9 & \textbf{-0.0104} & -0.006   & -0.0183 \\
\bottomrule
\end{tabular*}
\end{table}

\section{ Theoretical study of the  distribution of log-leaders for   random wavelet series} 
\label{Sec: theoretical study}
Previous statistical tests (see Section \ref{Sec: Stat model}) provided a broad understanding of the distribution of log-leaders: they are not normally distributed, but their distribution appears to be log-concave. To illustrate these experimental findings, we provide a first attempt to study the distribution of the log-leaders associated to {\em random wavelet series} $X=(X(t))_{t\in\R_+}$ defined by
\begin{equation}\label{Eq: decomp wavelet process}
X(t)= \sum_{j,k} 2^{-\alpha j}  \ X_{j,k} \ \psi(2^{j}t-k),
\end{equation}
where $\alpha\in\mathbb R_+^*$, the $2^{j/2} \psi(2^{j}\cdot+k) $ ($j,k \in \ZZ$) form an orthonormal wavelet basis and the wavelet coefficients satisfy the following assumptions (see \cite{AJ02} and therein): 
\begin{itemize}
\item The wavelet coefficients are independent both \emph{within} and \emph{across} scales,
\item The random variables $X_{j,k}$ appearing in \eqref{Eq: decomp wavelet process} are generalized Gaussians, i.e. their common density is defined on $\R$ by
\begin{equation}\label{Eq: density generalized Gaussian}
f_\beta(x)= \dfrac{\beta}{2 \Gamma (\frac{1}{\beta})} \e^{- \vert x \vert^\beta}=: \kappa_\beta \e^{-\vert x \vert^\beta},
\end{equation}
 where $\Gamma$ denotes the gamma function, $\beta >0$ is the shape parameter and $\kappa_\beta>0$ is a normalizing constant which belongs to the interval $(0,0.565)$ for any $\beta>0$.
 The family of densities $\{f_\beta,\,\beta>0\}$ includes the normal distribution (with variance $1/2$) when 
$ \beta =2$ and the Laplace distribution when $\beta=1$.
\end{itemize}
\noindent
We justify the assumption that the variables are generalized Gaussians rather than standard Gaussians based on considerations from signal and image processing \cite{wainwright2001random}, where such model assumptions are derived form the inspection of large collections of data. In these fields, wavelet coefficients often exhibit non-Gaussian distributions: most are small, but a few have large magnitudes, leading to heavy-tailed distributions when $\alpha<2$ \cite{lyu2008modeling}. Bayesian estimation methods have integrated non-Gaussian priors using Gaussian Mixture Models (GMM) \cite{Chipman_Kolaczyk_McCulloch_1997}, a mixture of a Gaussian and a point mass at $0$ \cite{abramovich1998wavelet}, and Gaussian scale mixtures for image denoising \cite{Portilla_Strela_W_S_2001, Strela_Portilla_Simoncelli_2000}.

\begin{remark}
To streamline the computations, we assume in this section that the wavelet coefficients are independent both \emph{across} and \emph{within} scales, meaning that the $X_{j,k}$ are independent in both $j$ and $k$. This brief analysis aims to theoretically validate the log-concavity model of log-leaders proposed in this paper, while recognizing that it relies on the widely used yet restrictive assumption of independence between scale coefficients.
We also justify neglecting these dependencies because we propose to provide an approximation rather than an exact form of the log-leaders' law. The long-term objective is to explore the implications that scale dependencies have on the laws of log-leaders; part of this program is developed in the forthcoming paper \cite{BNHHSJ2}. 
\end{remark}
\color{black}
The previous  assumptions entail a simplified model of a continuous law exhibiting non-zero density in the vicinity of 0. 
We identify the pair $(j,k)$ and $\lambda:=\lambda_{j,k}:= [ k/2^{j}, (k+1)/2^{j}]$ and we define the leaders as 
\begin{equation}\label{defwl2}
    \ell^1_{j,k} = \ell^1_{\lambda }= \underset{ \lambda ' \subset  \lambda}{\sup} \vert c_{\lambda ' }\vert.
\end{equation}
Note that we do not use $\lambda ' \subset 3 \lambda$ to guarantee independence between leaders at the same scale, despite its advantages in facilitating theoretical results and improving numerical efficiency. 
However, the standard  proof establishing that the Legendre spectrum provides an upper bound for the multifractal spectrum, see \cite{Jaffard2004}, relies on a definition of the scaling function based on 3-leaders, as given in \eqref{defwl}, i.e., with the quantity
\[  \ell^3_{j,k} = \ell^3_{\lambda }= \underset{ \lambda ' \subset  3 \lambda}{\sup} \vert c_{\lambda ' }\vert ,    \]
see 
\eqref{zeta_f}; in order to validate our simplifying choice here, 
it is therefore important to verify that the definitions of  scaling functions using  either $\ell^1_{\lambda }$ or $\ell^3_{\lambda }$  actually coincide. 
Let us introduce the following notations: 
we define the corresponding wavelet leader structure functions   \[ \mbox{ for } i\in\{1, 3\}, \quad  j\geq 0, \quad \mbox{ and } 
    \forall q \in \R, \qquad  S_i(j,q)= 2^{-dj} \sum_{k} \vert \ell^i_{j,k} \vert^q, \]
and 
the leader scaling functions  as 
\[  \mbox{ for } i\in\{ 1, 3\}, \quad 
   \forall q \in \R, \qquad  \zeta_i(q)= \underset{j \rightarrow + \infty }{\lim \inf} \dfrac{\log(S_i(j,q))}{\log(2^{-j})}.
\] 

\begin{proposition} \label{equalscal} 
  For any locally bounded function, the two  leader scaling functions defined above coincide, i.e. 
\begin{equation} \label{equalzeta} 
\forall q \in \R, \qquad  \zeta_1(q)=\zeta_3(q). 
\end{equation}
\end{proposition}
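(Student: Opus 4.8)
The plan is to sandwich the two structure functions $S_1(j,q)$ and $S_3(j,q)$ by one another, up to a multiplicative constant and a bounded shift in scale, and then to observe that neither a constant factor nor a fixed scale shift survives in the $\liminf$ defining the scaling functions. Indeed $\log(2^{-j})=-j\log 2\to-\infty$, so after dividing a logarithm by it a constant factor contributes a vanishing $o(1)$ term, and since $(j+2)/j\to 1$ a shift $j\mapsto j+2$ leaves the $\liminf$ unchanged. Throughout I would use two elementary facts. First, since $\lambda_{j,k}\subset 3\lambda_{j,k}$, one has $\ell^1_{j,k}\le\ell^3_{j,k}$ for every $j,k$. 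Second, because $3\lambda_{j,k}$ is exactly the union of the $3^d$ dyadic cubes $\lambda_{j,k'}$ with $\|k'-k\|_\infty\le 1$, and every finer dyadic cube contained in $3\lambda_{j,k}$ lies in one of them, one has the pointwise identity $\ell^3_{j,k}=\max_{\|k'-k\|_\infty\le 1}\ell^1_{j,k'}$.

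For $q\ge 0$ these two facts already settle the claim. The inequality $\ell^1_{j,k}\le\ell^3_{j,k}$ gives $S_1(j,q)\le S_3(j,q)$. For the reverse bound I would use $(\ell^3_{j,k})^q=\max_{\|k'-k\|_\infty\le 1}(\ell^1_{j,k'})^q\le\sum_{\|k'-k\|_\infty\le 1}(\ell^1_{j,k'})^q$ and sum over $k$; since each index $k'$ is counted at most $3^d$ times, this yields $S_3(j,q)\le 3^d S_1(j,q)$. Hence $S_1(j,q)\le S_3(j,q)\le 3^d S_1(j,q)$, and dividing logarithms by $\log(2^{-j})$ and letting $j\to\infty$ gives $\zeta_1(q)=\zeta_3(q)$ (the case $q=0$ being trivial since $S_i(j,0)=1$).

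The delicate case, which I expect to be the main obstacle, is $q<0$, where small leaders dominate the sums and the naive same-scale comparison breaks down: a cube on which $X$ is very regular can produce a tiny $\ell^1_{j,k}$, hence a huge $(\ell^1_{j,k})^q$, while the corresponding $\ell^3_{j,k}$ is ``healed'' by a rougher neighbour, so that $S_1(j,q)$ may exceed $S_3(j,q)$ at the same scale. The easy half remains immediate: $\ell^3_{j,k}\ge\ell^1_{j,k}$ gives $(\ell^3_{j,k})^q\le(\ell^1_{j,k})^q$, hence $S_3(j,q)\le S_1(j,q)$ and therefore $\zeta_3(q)\ge\zeta_1(q)$. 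For the reverse inequality I would pass to a finer scale. The key geometric observation is that if $\lambda_{j+2,k'}$ is one of the $(2^2-2)^d=2^d$ sub-cubes of $\lambda_{j,k}$ lying at distance at least $2^{-(j+2)}$ from the boundary of $\lambda_{j,k}$, then $3\lambda_{j+2,k'}\subset\lambda_{j,k}$, whence $\ell^3_{j+2,k'}\le\ell^1_{j,k}$ and, for $q<0$, $(\ell^3_{j+2,k'})^q\ge(\ell^1_{j,k})^q$.

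Summing this last inequality over all such interior sub-cubes, which are disjoint for distinct $k$, produces
\begin{equation*}
\sum_{k'}(\ell^3_{j+2,k'})^q\ \ge\ 2^d\sum_{k}(\ell^1_{j,k})^q,
\end{equation*}
that is $S_3(j+2,q)\ge 2^{-d}S_1(j,q)$, equivalently $S_1(j,q)\le 2^d\,S_3(j+2,q)$. Taking logarithms, dividing by $\log(2^{-j})$, and using that the constant $2^d$ and the index shift $j\mapsto j+2$ are asymptotically negligible in the $\liminf$, I would obtain $\zeta_1(q)\ge\zeta_3(q)$; combined with $\zeta_3(q)\ge\zeta_1(q)$ this gives equality for $q<0$ as well, establishing \eqref{equalzeta} for all $q\in\R$. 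The only point requiring genuine care is the bookkeeping of the scale shift: one must verify that $\liminf_j \log(S_3(j+2,q))/\log(2^{-j})$ equals $\zeta_3(q)$, which follows from $\log(2^{-j})/\log(2^{-(j+2)})=j/(j+2)\to 1$.
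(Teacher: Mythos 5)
Your proof is correct and follows essentially the same route as the paper's: the same-scale comparison $\ell^1_\lambda\le\ell^3_\lambda=\max$ of neighbouring $1$-leaders handles one inequality in each sign of $q$, and for $q<0$ the reverse bound comes from the identical key observation that an interior sub-cube $\lambda'$ two generations finer satisfies $3\lambda'\subset\lambda$, giving $S_1(j,q)\le C\,S_3(j+2,q)$ with a constant and scale shift that vanish in the $\liminf$. The only differences are cosmetic: you work directly in dimension $d$ and use all $2^d$ interior sub-cubes (the paper works in one dimension and uses a single one, getting the constant $4$ instead of your $2$), neither of which changes the argument.
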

\begin{proof}
This result is proved in  Appendix \ref{App: equality leaders}.    
\end{proof}

\begin{remark}
At a given scale $j$, all leaders $ \ell^1_{j,k} $ share the same distribution, and the distributions across scales can be derived from $ \ell^1_{0,0} $ by rescaling with the factor $2^{-\alpha j}$. Hence, it is sufficient to determine the distribution of $\ell^1_{0,0}$. We denote $I := \lambda_{0,0} = [0,1]$. 
\end{remark}
\noindent
Under previous assumptions, we are led to compute
\begin{align}\label{Eq: distribution leaders product}
\nonumber
\mathbb{P}(\ell^1_{0,0} \leq A) 
&= \mathbb{P}\Big( \underset{ \lambda \subset  I }{\sup} \vert c_{\lambda} \vert \leq A \Big) \\
\nonumber
&= \underset{ (j,k):\lambda\subset I}{\prod} \mathbb{P}(\vert c_{j,k} \vert \leq A )\\
&= \underset{(j,k):\lambda\subset I}{\prod} \mathbb{P}( \vert  X_{j,k}\vert \leq 2^{\alpha j} A )\\
&=\underset{j\in\N}{\prod} \mathbb{P}( \vert  X_{j,k}\vert \leq 2^{\alpha j} A )^{2^j},
\end{align}
where we have used the independence of the $c_{j,k}$ in the second line and that the $X_{j,k}$ are identically distributed in the last one. 
We are interested in the tail regime, specifically when $A$ takes very small or very large values for two main reasons:
\begin{itemize}
\item \textbf{Relevance to Applications}: As detailed in Section \ref{spect-estimation} examining small values of $A$ is particularly relevant for applications.  In such cases, wavelet leaders are  preferred over wavelet coefficients as multiresolution quantities in scaling functions. This preference arises because, unlike wavelet coefficients, the distributions of wavelet leaders vanish near zero, enabling the reliable numerical computation of negative moments which, in some applications yields a key information on the decreasing part of the multifractal spectrum. The computations presented here represent the first mathematical effort to rigorously substantiate this claim.
\item \textbf{Exponential Decay of Tails}: Log-concave distributions are defined by tails that decay at least as rapidly as exponential distributions. While the explicit asymptotic results we obtain cannot definitively confirm or refute the log-concavity of the distributions of log-leaders, they do show that the tails of these distributions decay exponentially fast. This finding is consistent with the hypothesis of log-concavity tested in the previous section.
\end{itemize}
\color{black}
The following approximation results yield  the tail distributions of log-leaders.



%
\begin{theorem}
\label{Th: small A leaders}
let  $X=(X(t))_{t\in\R_+}$ denote  the random wavelet series defined by \eqref{Eq: decomp wavelet process} where the distribution of the $X_{j,k}$ is given by \eqref{Eq: density generalized Gaussian}; then the following results hold:  
\begin{enumerate}
\item \textbf{\underline{Distribution of small leaders}} Assume that $\alpha > \log(1.13\pi) / \log 4\simeq 0.914$. Let $A\leq 2^{-\alpha }$. 
Then, there exist constants $\lambda_1,\lambda_2>0$ such that
\begin{equation}\label{Eq: tail bound small A}
\lambda_1\frac{2^{\alpha}}{A} \exp\Big(A^{-1/\alpha}\log\Big(\frac{2c_{l_\beta} \kappa_\beta\Lambda_\beta}{ 2^{2\alpha}}\Big)\Big)\leq \mathbb P( \ell^1_{0,0} \leq A)\leq \lambda_2 \frac{2^{\alpha}}{A}\exp\Big(A^{-1/\alpha}\log\Big(\frac{2c_{l_\beta} \kappa_\beta\Lambda_\beta}{ 2^{2\alpha}}\Big)\Big),
\end{equation}
where 
\begin{equation*}
c_{l_\beta}=\prod_{j\geq l_\beta}(1-1/(4j^2)),
\end{equation*}
$l_\beta>0$ is a constant, $\kappa_\beta$ and $\Lambda_\beta\in(1,\pi/2)$ are given by \eqref{Eq: density generalized Gaussian} and \eqref{Eq: def lambda beta}, respectively.
\item \textbf{\underline{Distribution of large leaders}}. Assume that $\alpha$ satisfies $2^{\alpha\beta}(1/\log(2)-\alpha\beta+\log_2(\alpha\beta\log(2)))-1>0$. For any $\beta>0$ there exist $A_\beta$ and $C_{\alpha,\beta}>0$ such that for all $A>A_\beta$,
\begin{equation}\label{Eq: tail bound large A}
\mathbb{P}(\ell^1_{0,0} > A)\leq \frac{C_{\alpha,\beta}\e^{-A^\beta}}{\beta A^{\beta-1}}.
\end{equation}
\vspace{3pt}\\
\textbf{Case 1:}  If  $\beta>1$, then  $A_\beta$ can be chosen such that
\begin{equation*}
\frac{1}{1+(\beta-1)/(\beta A_{\beta}^\beta)}=0.99.
\end{equation*}
\textbf{Case 2: } If  $0<\beta<1$, then  $A_\beta$ can be chosen such that
\begin{equation*}
\frac{1}{1-(1-\beta)/(\beta A_\beta^\beta)}=1.01.
\end{equation*}
\textbf{Case 3: $\beta=1$.}  If  $A$ is chosen large enough so that 
$ \forall j\in \N$, $2^{\alpha j}A\geq A+2^{\alpha}j$, 
then
\begin{equation*}
\mathbb{P}(\ell^1_{0,0} > A)\leq \e^{-j2^{\alpha}}=\e^{-A}\frac{1}{1-\frac{2}{e^{2^\alpha A}}}.
\end{equation*}
\end{enumerate}
\end{theorem}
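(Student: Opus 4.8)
The plan is to run both parts off the exact factorisation \eqref{Eq: distribution leaders product}. Writing $p_j(A):=\mathbb P(|X_{j,k}|\le 2^{\alpha j}A)=2\kappa_\beta\int_0^{2^{\alpha j}A}\e^{-x^\beta}\,dx$, that identity reads $\mathbb P(\ell^1_{0,0}\le A)=\prod_{j\ge0}p_j(A)^{2^j}$, so that $\log\mathbb P(\ell^1_{0,0}\le A)=\sum_{j\ge0}2^j\log p_j(A)$. The whole analysis is governed by the single critical scale $J=J(A)$ at which the argument $u_j:=2^{\alpha j}A$ crosses $1$, i.e. $J\approx\frac1\alpha\log_2(1/A)$: for $j<J$ the $u_j$ are small and $p_j(A)\approx 2\kappa_\beta u_j$, while for $j>J$ one has $p_j(A)\to1$. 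Both the geometric weight $2^j$ and the $L^1$ rescaling $2^{\alpha j}$ are dyadic, so the sum is concentrated on the last few scales before $J$.

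\textbf{Part 1 (small leaders).} First I would establish elementary two‑sided bounds on $p_j$: in the small‑argument range $2\kappa_\beta u_j\,\e^{-u_j^\beta}\le p_j(A)\le 2\kappa_\beta u_j$ (using $\e^{-u_j^\beta}\le \e^{-x^\beta}\le1$ on $[0,u_j]$), and in the large‑argument range $p_j(A)=1-2\kappa_\beta\int_{u_j}^\infty\e^{-x^\beta}dx$ with a super‑exponentially small defect. Splitting $\sum_{j\ge0}2^j\log p_j(A)$ at $J$ and inserting $\log p_j\approx \log(2\kappa_\beta A)+\alpha j\log2$ for $j<J$, the two geometric sums $\sum_{j<J}2^j=2^J-1$ and $\sum_{j<J}j2^j=(J-2)2^J+2$ combine; after substituting the defining relation $\alpha J\log2+\log A\approx0$ of the critical scale, the leading term collapses to $2^J\log\!\big(2\kappa_\beta/2^{2\alpha}\big)$ with $2^J\asymp A^{-1/\alpha}$, reproducing the announced exponent up to the constant inside the logarithm. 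The remaining delicate contribution comes from the finitely many scales with $u_j=O(1)$ near the transition, together with the large‑argument tail $\sum_{j\ge J}2^j\log p_j$; a careful evaluation of this boundary sum is what upgrades $2\kappa_\beta$ to $2\kappa_\beta c_{l_\beta}\Lambda_\beta$, the Wallis‑type tail product $c_{l_\beta}=\prod_{j\ge l_\beta}(1-1/(4j^2))$ and $\Lambda_\beta\in(1,\pi/2)$ emerging from the sine‑product identity $\prod_{j\ge1}(1-1/(4j^2))=2/\pi$. Because $J$ is an integer while $\frac1\alpha\log_2(1/A)$ is not, the factor relating $2^J$ to $A^{-1/\alpha}$ oscillates within a bounded range as $A\to0$; absorbing this oscillation and the subleading $\log(1/A)$ terms produces the polynomial prefactor $2^\alpha/A$ and forces the two distinct constants $\lambda_1<\lambda_2$ in \eqref{Eq: tail bound small A}. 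The hypothesis $\alpha>\log(1.13\pi)/\log4$ is precisely what guarantees $2\kappa_\beta c_{l_\beta}\Lambda_\beta<2^{2\alpha}$, so that the logarithm is negative and the bound decays.

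\textbf{Part 2 (large leaders).} Here a one‑sided union bound suffices: $\mathbb P(\ell^1_{0,0}>A)\le\sum_{j\ge0}2^j\,\mathbb P(|X_{j,k}|>2^{\alpha j}A)$. The generalized‑Gaussian tail is controlled by one integration by parts, $\int_t^\infty\e^{-x^\beta}dx=\frac1\beta t^{1-\beta}\e^{-t^\beta}+\frac{1-\beta}\beta\int_t^\infty x^{-\beta}\e^{-x^\beta}dx$, so that $\mathbb P(|X|>t)$ equals $\frac{2\kappa_\beta}{\beta}t^{1-\beta}\e^{-t^\beta}$ times a correction factor whose sign is that of $1-\beta$. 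This is exactly the source of the three cases: for $\beta>1$ the correction is $<1$ and one picks $A_\beta$ so that it stays above $0.99$, for $0<\beta<1$ it exceeds $1$ and one keeps it below $1.01$, and for $\beta=1$ the integral equals $\e^{-t}$ with no correction. Factoring out the dominant $j=0$ term $\frac{2\kappa_\beta}{\beta}A^{1-\beta}\e^{-A^\beta}$ and bounding the ratio of consecutive terms, the series $\sum_{j\ge1}$ is dominated by a convergent geometric series because $2^{\alpha\beta}>1$ makes $(2^{\alpha j}A)^\beta$ grow double‑exponentially; the stated inequality $2^{\alpha\beta}(1/\log2-\alpha\beta+\log_2(\alpha\beta\log2))-1>0$ certifies this geometric domination for $A>A_\beta$ and lets one collect the tail into the single constant $C_{\alpha,\beta}$, giving \eqref{Eq: tail bound large A}. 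In the Laplace case $\beta=1$ the condition $2^{\alpha j}A\ge A+2^\alpha j$ turns the union bound into the explicit geometric sum $\e^{-A}\sum_{j\ge0}(2\e^{-2^\alpha})^j$.

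\textbf{Main obstacle.} The genuinely hard point is the small‑$A$ part, and within it the $O(1)$ transition region: extracting the exact constant $2\kappa_\beta c_{l_\beta}\Lambda_\beta/2^{2\alpha}$ requires summing the finitely many boundary scales together with the large‑argument tail precisely enough to produce the Wallis/sine‑product constants, and then showing that the unavoidable oscillation caused by the non‑integer critical scale affects only the polynomial prefactor and not the exponent — this is what pins the \emph{same} exponent in both the upper and lower bounds while allowing $\lambda_1\ne\lambda_2$. By contrast the large‑$A$ part is comparatively routine, reducing to a union bound, a single integration by parts, and a geometric‑series estimate.
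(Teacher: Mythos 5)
Your proposal follows essentially the same route as the paper: the small-leader bound is obtained by splitting the product $\prod_{j}p_j(A)^{2^j}$ at the critical scale $l$ with $2^{-\alpha(l+1)}<A\leq 2^{-\alpha l}$ (your $J$), using $p_j\approx 2\kappa_\beta 2^{\alpha j}A$ below it, the identity $\sum_{j<l}2^j(l-j)=2^{l+1}-(l+2)$, and the Wallis-type products $\prod_{j}(1-1/(4j^2))$ to control the tail scales; the large-leader bound is the same union bound plus Mills-ratio (integration-by-parts) estimates with the identical three-case choice of $A_\beta$ and the geometric-series domination certified by the stated condition on $\alpha$. The only real difference is presentational (you work with the logarithm of the product rather than the product itself), and you correctly single out the transition-region bookkeeping and the integer/non-integer critical-scale oscillation as the delicate step, which is exactly where the paper's own argument does its heaviest lifting.
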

%
\begin{remark}[Conditions on $\alpha$]\label{Rmk:alpha} For technical reasons and to facilitate calculations, we impose conditions on the value of $\alpha$ that depend on whether we are analyzing the distribution of small or large leaders.  
\\
The function $ \beta \in \mathbb{R}_{+}^* \mapsto \kappa_\beta = \beta / \Gamma(1/\beta) $ is bounded by $1.13$. Consequently, when considering small leaders, we choose \[  \alpha > \log(1.13\pi) / \log 4 \simeq 0.914 . \] Indeed, since $ c_{l_\beta} \in (2/\pi,1) $ and $ \Lambda_\beta \in (1,\pi/2) $, we obtain 
\begin{equation}\label{Eq: bound kappa beta}
2\alpha\log 2 > \log(1.13\pi) > \log\big(2 c_{l_\beta} \kappa_\beta \Lambda_\beta\big).
\end{equation}
For large leaders, we consider the function $ g_c: x \in \mathbb{R}^+ \mapsto 2^{cx} - 2^c x-1$ for any $ c > 0 $. It can be shown that if $ g(x_c) > 0 $, where  
\begin{equation*}
x_c = \frac{2^c}{\log 2} - c + \log_2(c \log 2),
\end{equation*}
then $g_c$ remains positive over $ \mathbb{R}_+ $. Let $c=\alpha\beta$. Assuming that $ \alpha $ satisfies $x_{\alpha\beta}>0$, we further have that for any $ j \geq 0 $, $\e^{-(2^{\alpha j} A)^\beta} \leq \e^{-A^\beta}\e^{-j(2^{\alpha} A)^\beta}$, which is useful for computations.
\end{remark}

\begin{remark}[Order of distribution tails]
It follows from  \eqref{Eq: tail bound small A} and \eqref{Eq: bound kappa beta} that 
%
\begin{equation*}
\mathbb P(\ell^1_{0,0} \leq A)=\Theta\big(A^{-1} \exp\big(-c_{\alpha,\beta}A^{-1}\big)\big),
\end{equation*}
where for two real functions $f$ and $g$, $f(A)=\Theta (g(A))$ means that $f$ grows asymptotically (with respect to $A$) at the same rate as $g$; and $c_{\alpha,\beta}=2\alpha\log 2-\log(2 c_{l_\beta}\lambda_\beta\kappa_\beta)$ is a constant which does not  depend on $A$. The tail of the distribution for small log-leaders is exponential, while the factor $A^{-1}$ corresponds to a logarithmic correction. Equation \eqref{Eq: tail bound large A} shows that the tails of large leaders also exhibit exponential decay.
Both tail behaviors of large and small leaders are consistent with the properties of log-concave distributions, which exhibit exponential or sub-exponential decay.
\end{remark}
\color{black}
\noindent
The proof of Theorem \ref{Th: small A leaders} relies on Lemma \ref{Lem: Mills ratio} (see Appendix \ref{App: Mills ratio}). We will also use the following well-known identities
\begin{equation}\label{Eq: sum power}
\sum_{k=0}^Nkx^k=\frac{x\big(Nx^{N+1}-(N+1)x^N+1\big)}{(x-1)^2},
\end{equation}
that holds for any $N\in\mathbb N^*$ and $x\in\R$, as well as

\begin{equation}\label{Eq: infinite products}
 \underset{j\in\N^*}{\prod}\Big(1-\frac{1}{4j^2}\Big)=\frac{2}{\pi} \quad\text{and}\quad \underset{j\in\N^*}{\prod}\Big(1+\frac{1}{4j(j+1)}\Big)=\frac{4}{\pi}.
\end{equation}

\begin{proof}[Proof of Theorem \ref{Th: small A leaders}] \textbf{\underline{Distribution of small leaders}} Let some $\alpha>\log(1.13\pi)/\log 4$. Assuming that $A\leq 2^{-\alpha}$, there exists $l\in\N^*$ such that $2^{-\alpha (l+1)}<A\leq 2^{-\alpha l}$. For $t$ small enough,
\begin{equation*} 
\mathbb{P}( \vert X_{j,k} \vert \leq t)=\kappa_\beta \int_{-t}^{t} \e^{-| x|^\beta} dx  \underset{t}{\sim} \kappa_\beta\int_{-t}^{t} dx = 2\kappa_\beta t,
\end{equation*}
where the symbol $\sim$ indicates that the functions are equivalent in the limit when $t\rightarrow 0$.
Then,
\begin{align}\label{Eq: product small A}
\nonumber
\underset{j\in\N}{\prod} \mathbb{P}( \vert  X_{j,k}\vert \leq 2^{\alpha j}  A )^{2^j}
&\leq \prod_{j=0}^{l-1} \mathbb{P}( \vert  X_{j,k}\vert \leq 2^{\alpha (j-l)})^{2^j} \prod_{j\geq l}\mathbb{P}( \vert  X_{j,k}\vert \leq 2^{\alpha (j-l)})^{2^j} \\
\nonumber
&\sim \big(2\kappa_\beta\big)^{2^l-1} \prod_{j=0}^{l-1}2^{-2^{j}\alpha (l-j)}\prod_{j\geq l}\mathbb{P}( \vert  X_{j,k}\vert \leq 2^{\alpha (j-l)})^{2^j}\\
\nonumber
&=\big(2\kappa_\beta\big)^{2^l-1}\cdot 2^{-\alpha \sum_{j=0}^{l-1}2^j(l-j)}\prod_{j\geq l}\mathbb{P}( \vert  X_{j,k}\vert \leq 2^{\alpha (j-l)})^{2^j}\\
\nonumber
&= \big(2\kappa_\beta\big)^{2^l-1}\cdot 2^{-\alpha(2^{l+1}-(l+2))}\prod_{j\geq l}\mathbb{P}( \vert  X_{j,k}\vert \leq 2^{\alpha (j-l)})^{2^j}\\
&=\big(2\kappa_\beta\big)^{2^l-1}\cdot 2^{-\alpha(2^{l+1}-(l+2))}\prod_{i\geq 0}\mathbb{P}( \vert  X_{i,k}\vert \leq 2^{\alpha i})^{2^{i+l}},
\end{align}
where \eqref{Eq: sum power} was used in the third line with $x=1/2$ to show that
\begin{equation*}
\sum_{j=0}^{l-1}2^j(l-j)=2^l\sum_{k=1}^{l}k2^{-k}=2^l\cdot 2(l2^{-(l+1)}-(l+1)2^{-l}+1)=2^{l+1}-(l+2).
\end{equation*}
Similarly, we can check that
\begin{equation}\label{Eq: product small A bis}
\underset{j\in\N}{\prod} \mathbb{P}( \vert  X_{j,k}\vert \leq 2^{\alpha j}  \ A )^{2^j}>\big(2\kappa_\beta\big)^{2^{l+1}-1}\cdot 2^{-\alpha(2^{l+2}-(l+3))}\prod_{i\geq 0}\mathbb{P}( \vert  X_{i,k}\vert \leq 2^{\alpha i})^{2^{i+l+1}}.
\end{equation}
\noindent
Moreover, using \eqref{Eq: Asymptotics generalised gaussian}, for $x$ large enough, we have $\mathbb P(|X_{j,k}|\leq x)\underset{x}\sim 1-2\kappa_\beta\e^{-|x|^\beta}/(\beta x^{\beta-1})$. Then, there exists $l_{\beta}>0$ such that for any $i\geq l_\beta$,
\begin{equation}\label{Eq:lbeta}
1-\frac{1}{4i^2}\leq \bigg(1-\frac{2\kappa_\beta \e^{-|2^{\alpha i}|^\beta} }{\beta \big(2^{\alpha i}\big)^{\beta-1}}\bigg)^{2^{i+l+1}},
\end{equation}
as well as
\begin{equation}\label{Eq: inequality probability j small A}
\begin{aligned}
& \Big(1-\frac{1}{4i^2}\Big)\bigg(1-\frac{2\kappa_\beta \e^{-|2^{\alpha i}|^\beta} }{\beta \big(2^{\alpha i}\big)^{\beta-1}}\bigg)^{2^{i+l+1}} \leq \mathbb{P}( \vert  X_{i,k}\vert \leq 2^{\alpha i})^{2^{i+l+1}} \\
& \text{and} \\
& \mathbb{P}( \vert  X_{i,k}\vert \leq 2^{\alpha i})^{2^{i+l}} \leq \Big(1+\frac{1}{4i(i+1)}\Big)\bigg(1-\frac{2\kappa_\beta\e^{-|2^{\alpha i}|^\beta}}{\beta \big(2^{\alpha i}\big)^{\beta-1}}\bigg)^{2^{i+l}}.
\end{aligned}
\end{equation}

Using \eqref{Eq: infinite products}, this entails
\begin{align*}
c_{l_\beta}\prod_{i\geq l_\beta}\bigg(1-\frac{2\kappa_\beta \e^{-|2^{\alpha i}|^\beta} }{\beta \big(2^{\alpha i}\big)^{\beta-1}}\bigg)^{2^{i+l+1}}\leq\prod_{i\geq l_\beta}\mathbb{P}( \vert  X_{i,k}\vert \leq 2^{\alpha i})^{2^{i+l+1}} \\ \quad\text{and}\quad\prod_{i\geq l_\beta}\mathbb{P}( \vert  X_{i,k}\vert \leq 2^{\alpha i})^{2^{i+l}}\leq C_{l_\beta}\prod_{i\geq l_\beta}\bigg(1-\frac{2\kappa_\beta \e^{-|2^{\alpha i}|^\beta} }{\beta \big(2^{\alpha i}\big)^{\beta-1}}\bigg)^{2^{i+l}}
\end{align*}
with $c_{l_\beta}=\prod_{i\geq l_\beta}(1-1/(4i^2))\in(\pi/2,1)$ and $C_{l_\beta}=\prod_{i
\geq l_\beta}(1+1/(4i(i+1)))\in(1,4/\pi)$. We finally obtain
\begin{equation}\label{Eq: ineq small leaders}
\begin{aligned}
c_{l_\beta} k_\beta\prod_{i=0}^{l_\beta-1}\mathbb{P}( \vert  X_{i,k}\vert \leq 2^{\alpha i})^{2^{i+l+1}}\prod_{i\geq l_\beta}\bigg(1-\frac{2\kappa_\beta \e^{-|2^{\alpha i}|^\beta} }{\beta \big(2^{\alpha i}\big)^{\beta-1}}\bigg)^{2^{i+l+1}}\leq \mathbb P(\ell^1_{0,0}\leq A)
\\ \quad\text{and}\quad
\mathbb P(\ell^1_{0,0}\leq A)\leq
C_{l_\beta}K_\beta\prod_{i=0}^{l_\beta-1}\mathbb{P}( \vert  X_{i,k}\vert \leq 2^{\alpha i})^{2^{i+l}}\prod_{i\geq l_\beta}\bigg(1-\frac{2\kappa_\beta \e^{-|2^{\alpha i}|^\beta} }{\beta \big(2^{\alpha i}\big)^{\beta-1}}\bigg)^{2^{i+l}}
\end{aligned}
\end{equation}
with $k_\beta=\big(2\kappa_\beta\big)^{2^{l+1}-1}\cdot 2^{-\alpha (2^{l+2}-(l+3))}$ and $K_\beta=\big(2\kappa_\beta\big)^{2^l-1}\cdot 2^{-\alpha (2^{l+1}-(l+2))}$ come from \eqref{Eq: product small A} and \eqref{Eq: product small A bis}. Note that for $\tau\in\{0,1\}$, by \eqref{Eq:lbeta}, there exists $\Lambda_\beta\in(1,\pi/2)$ such that
\begin{equation}\label{Eq: def lambda beta}
\prod_{i\geq l_\beta}\bigg(1-\frac{2\kappa_\beta \e^{-|2^{\alpha i}|^\beta} }{\beta \big(2^{\alpha i}\big)^{\beta-1}}\bigg)^{2^{i+l+\tau}}= \bigg[\Lambda_\beta\prod_{i\geq l_\beta}\bigg(1-\frac{1}{4i^2}\Big)\bigg]^{2^{l+\tau}}=\big(c_{l_\beta}\Lambda_\beta\big)^{2^{l+\tau}}>0,
\end{equation}
so that the infinite products appearing on the left and right sides of \eqref{Eq: ineq small leaders} are equal to  strictly positive constants.
Inserting the previous results into \eqref{Eq: product small A} and \eqref{Eq: product small A bis}, we find that 
%
%
\begin{equation*}
\lambda_12^{\alpha (l+3)}\Big(\frac{2 c_{l_\beta} \kappa_\beta\Lambda_\beta}{2^{2\alpha}}\Big)^{2^{l+1}} \leq  \mathbb P( \ell^1_{0,0} \leq A)\leq \min \bigg\{\lambda_22^{\alpha (l+2)}\Big(\frac{2c_{l_\beta}\kappa_\beta\Lambda_\beta}{2^{2\alpha}}\Big)^{2^l}\,;\, 1\bigg\},
\end{equation*}
where $\lambda_1$ and $\lambda_2$ are positive constants proportional to
\begin{equation*}
\lambda_1=(2\kappa_\beta)^{-1}k_1c_{l_\beta} \prod_{i=0}^{l_\beta-1}\mathbb{P}( \vert  X_{i,k}\vert \leq 2^{\alpha i})^{2^{i+l+1}}\quad\text{and}\quad \lambda_2=(2\kappa_\beta)^{-1} k_2C_{l_\beta}\prod_{i=0}^{l_\beta-1}\mathbb{P}( \vert  X_{i,k}\vert \leq 2^{\alpha i})^{2^{i+l}}
\end{equation*}
Noting by \eqref{Eq: bound kappa beta} that $ 2 c_{l_\beta}\kappa_\beta\Lambda_\beta/2^{2\alpha}<1$,
\begin{equation*}
\lambda_1\frac{2^{\alpha}}{A} \exp\Big(A^{-1/\alpha}\log\Big(\frac{2c_{l_\beta} \kappa_\beta\Lambda_\beta}{ 2^{2\alpha}}\Big)\Big)\leq \mathbb P(\ell^1_{0,0} \leq A)\leq \lambda_2 \frac{2^{\alpha}}{A}\exp\Big(A^{-1/\alpha}\log\Big(\frac{2c_{l_\beta} \kappa_\beta\Lambda_\beta}{ 2^{2\alpha}}\Big)\Big),
\end{equation*}
as $A\in(2^{-\alpha(l+1)},2^{-\alpha l}]$.
Hence the result.
\color{black} 
\noindent
 \\
\color{black}
\vspace{3pt}\\
\noindent
\textbf{\underline{Distribution of large leaders}} Let $\beta\in\R_+^*$. We are interested in the behaviour of large log-leaders, i.e. we aim at approximating $\mathbb P(\ell^1_{0,0} >A)$ for $A>0$ sufficiently large.
First,
\begin{align} \label{Eq: distribution leaders product large}
\nonumber
\mathbb{P}( \ell^1_{0,0}> A) 
&= \mathbb{P}\Big( \underset{ \lambda \subset  I }{\sup} \vert c_{\lambda} \vert >A \Big) \\
\nonumber
&\leq \sum_{\lambda \subset I} \mathbb{P} \big( \vert c_{\lambda} \vert >A \big)\\
\nonumber
&= \underset{ (j,k):\lambda \subset I}{\sum} \mathbb{P}(\vert c_{j,k} \vert > A )\\
\nonumber
&= \underset{(j,k):\lambda \subset I}{\sum} \mathbb{P}( \vert  X_{j,k} \vert > 2^{\alpha j} A )\\
&=\underset{j\in\N}{\sum}2^j \mathbb{P}( \vert  X_{j,k} \vert > 2^{\alpha j}  \ A ),
\end{align}
where we have used that the $X_{j,k}$ are identically distributed in the last line. Let $A$ be sufficiently large, meaning any $A > A_\beta$, where $A_\beta > 0$ is chosen such that $0.99\kappa_\beta \frac{\e^{-A_\beta^\beta}}{\beta A_\beta^{\beta-1}} \leq \mathbb{P}(X_{0,k} > A_\beta) \leq 1.01\kappa_\beta \frac{\e^{-A_\beta^\beta}}{\beta A_\beta^{\beta-1}}$
(see \eqref{Eq: Mills ratio}). Then, for all $j\in\N$, 
\begin{equation}\label{Eq: inequality probability j}
0.99\cdot 2^j\bigg(\frac{\kappa_\beta \e^{-(2^{\alpha j}A)^\beta} }{\beta \big(2^{\alpha j} A\big)^{\beta-1}}\bigg)\leq2^j\mathbb{P}( \vert  X_{j,k}\vert >2^{\alpha j} A )\leq 1.01\cdot 2^j\bigg(\frac{\kappa_\beta\e^{-(2^{\alpha j}A)^\beta}}{\beta \big(2^{\alpha j} A\big)^{\beta-1}}\bigg).
\end{equation}
Note that (see Remark \ref{Rmk:alpha}), since $2^{\alpha\beta}(1/\log(2)-\alpha\beta+\log_2(\alpha\beta\log(2)))-1>0$, it follows that for any $j\geq 0$, $ \e^{-(2^{\alpha j}A)^\beta}\leq \e^{-A^\beta}\e^{-j(2^{\alpha}A)^\beta}$.
Combining this with \eqref{Eq: distribution leaders product large} and \eqref{Eq: inequality probability j} entails
\begin{align}\label{Eq: bound proba A large}
\mathbb{P}(\ell^1_{0,0} > A)\leq \frac{1.01\kappa_\beta \e^{-A^\beta}}{\beta A^{\beta-1}}
\underset{j\geq 0}{\sum}\frac{\e^{-j(2^{\alpha}A)^\beta}\cdot 2^j}{2^{\alpha(\beta-1)j}}\leq\frac{1.01\kappa_\beta \e^{-A^\beta}}{\beta A^{\beta-1}}\frac{1}{1-\frac{2^{1-\alpha(\beta-1)}}{\e^{2^{\alpha\beta}A^\beta}}}.
\end{align}
It is then enough to determine $A_\beta$ according to the values of $\beta\in(0,1)$.\\
\textbf{Case 1: $\beta>1$.} Using \eqref{Eq Mills beta grand}, we can choose $A_{\beta}$ such that
\begin{equation*}
0.99=\frac{1}{1+(\beta-1)/(\beta A_{\beta}^\beta)},
\end{equation*}
which enables us to satisfy \eqref{Eq: inequality probability j} for $j = 0$. We can check that \eqref{Eq: inequality probability j} holds for any $A>A_\beta$ and for all $j\in\N$. 
\vspace{3pt}\\
\noindent
\textbf{Case 2: $0<\beta<1$.} 
Using \eqref{Eq Mills beta petit}, we can choose $A_{\beta}$ such that
\begin{equation*}
\frac{1}{1-(1-\beta)/(\beta A_\beta^\beta)}=1.01,
\end{equation*}
which is \eqref{Eq: inequality probability j} for $j=0$. As before, we can check that \eqref{Eq: inequality probability j} holds for any $A>A_\beta$ and for all $j\in\N$. 
\vspace{3pt}\\
\noindent
\textbf{Case 3: $\beta=1$.} In that case, using that $\mathbb{P}( \vert  X_{j,k} \vert > 2^{\alpha j}A )=\e^{-2^{\alpha j}A}$, we get by \eqref{Eq: distribution leaders product large} that
\begin{align*}
\mathbb{P}(\ell^1_{0,0} > A)
&\leq \sum_{j\in\N}2^j \e^{-2^{\alpha j}A}\\
&\leq \e^{-A}\sum_{j\in\N}2^j \e^{-j2^{\alpha}A}=\e^{-A}\frac{1}{1-\frac{2}{e^{2^\alpha A}}},
\end{align*}
using that we have chosen $\alpha$ (see Remark \ref{Rmk:alpha}) such that $2^{\alpha j}A\geq A(1+2^{\alpha}j)$ for all $j\in\N$.
\end{proof}

\begin{example} Let us choose $\beta=2$. Then, the $X_{j,k}$ are Gaussian random variables with variance $1/2$ and $\kappa_\beta=1/\sqrt{\pi}$.\\
\textbf{\underline{Distribution of small leaders}} 
To obtain an expression of \eqref{Eq: tail bound small A} with explicit values, we need to choose $\alpha$. For example, let $\alpha=1$, and  $l_2=3$, so that $0.99\e^{-x^2}/(2\sqrt{\pi}x) \leq \mathbb P(X>x)\leq \e^{-x^2}/(2\sqrt{\pi}x)$ for all $x\geq 2^{l_2}$. 
Using \eqref{Eq: product small A} as well as $c_{l_2}=\prod_{j\geq 3}(1-1/(4j^2))=128/(45\pi)$, we obtain that  and
\begin{equation}\label{Eq: A small Gaussian}
\mathbb{P}(\ell^1_{0,0} \leq  A)=\Theta \bigg(2A^{-1} \exp\Big(A^{-1}\Big(\log\Big(\frac{2c_{l_2}\Lambda_2}{\sqrt{\pi}}\Big)\Big)-2 \log(2)\Big)\bigg).
\end{equation}
where $c_{l_2}\Lambda_2\in(256/(45\pi^2),1)$. We can check that $2c_{l_2}\Lambda_2/\sqrt{\pi}-2\log(2)<2/\sqrt{\pi}-2\log 2<0$.\\
%
%
\noindent
\textbf{\underline{Distribution of large leaders}} 
 we can choose $A_{2}$ such that
\begin{equation*}
0.99=\frac{1}{1+1/(2 A_{2}^2)}\;\text{i.e.}\; A_2=\sqrt{99/2}\simeq 7.04.
\end{equation*}
Then for all $A>A_2$, by \eqref{Eq: tail bound large A}, there exists $c>0$ such that
\begin{equation}\label{Eq: A_2 Gaussian}
\mathbb{P}( \ell^1_{0,0} > A)\leq \frac{c\e^{-A^2}}{A}\frac{1}{1-\frac{2^{1-\alpha}}{\e^{2^{2\alpha}A^2}}}.
\end{equation}
\end{example}

\begin{remark}
Both results \eqref{Eq: A small Gaussian} and \eqref{Eq: A_2 Gaussian} align with the negative outcomes of the log-normality test performed on the leaders (see Section \ref{Subsec: Normality}): even when the distribution of the variables $X_{j,k}$ is Gaussian, the tails of the distribution of the log-leaders are significantly lighter than those of a Gaussian.
\end{remark}

\color{blue}
\color{black}
\section{Conclusion and prospects}\label{Sec: conclusion}
In this work, we focused on the distribution of log-leaders, key quantities for estimating the multifractal spectrum. The first step involved rejecting the commonly held assumption that log-leaders follow Gaussian distributions. A normality test was applied to the log-leaders of various types of processes (fractional Brownian motion, compound Poisson process, multifractal random walk) as well as to heart rate and speed data from marathon runners. We then applied an experimentally validated test for log-concavity, which allowed us to propose a wider nonparametric model for  log-leaders:  log-concave distributions. This allows for a reassessment of the estimation of the parameters $c_1$ and $c_2$ and for establishing a confidence interval for the estimation of $c_1$ based on the CLT, without relying on the Gaussianity assumption.
Finally, we conducted  a preliminary theoretical study which allowed us to  approximate the cumulative distribution function of the log-leaders for random wavelet series, assuming independence of wavelet coefficients both within and across scales. This corroborates the empirical discovery that, although the distributions of log-leaders are not normal, they belong to the larger class of  log-concave distributions. This exploratory work should help refine the methods used in multifractal analysis until now, which seem numerically justified but whose validity is demonstrated under the overly restrictive/invalidated assumption of normality.\\
\noindent
A primary direction for applying and extending this work is in estimating the scaling exponents that characterize scale invariance properties, as well as constructing confidence intervals to measure their quality. Standard methods for constructing such bounds are often based on Gaussian theory; they prove effective for Gaussian self-similar processes \cite{abry2000wavelets,veitch1999wavelet} but perform poorly when applied to multifractal processes due to their non-Gaussian nature. However, the use of nonparametric or Empirical Cumulative Distribution Function bootstrap (ECDF-bootstrap) allows for circumventing the problem by assuming that little to nothing is known about the underlying model of the data (see \cite{wendt2007bootstrap}). To take advantage of the fact that the distributions of the log-leaders are log-concave, it would be interesting to explore other types of methods, such as those studied by Azadbakhsh, Jankowski and Gao \cite{azadbakhsh2014computing}, for computing confidence intervals for log-concave densities. These methods are notably based on the fact that the log-concave MLE can be used in Monte Carlo bootstrap procedures, as noted by Cule and Samworth \cite{Cule_Samworth_2010}. We have provided a confidence interval for $c_1$ based on the CLT. Establishing one for $c_2$ would require computing a confidence interval for the variance of log-concave distributions—a challenging problem that remains unsolved.\\
Furthermore, our result nuances that of  \cite{wendt2013bayesian}, which state that the distribution of the log-leaders can be approximated by a Gaussian. Based on this assumption, Combrexelle,  Wendt, Dobigeon, Tourneret, McLaughlin, and Abry construct a semi-parametric model for the statistics of the logarithm, which they then use to define and study a Bayesian estimator of the multifractality parameter for synthetic multifractal processes in \cite{wendt2013bayesian}, image texture \cite{combrexelle2015bayesian} and multivariate time series in \cite{Combrexelle2016EUSIPCO}.
From this perspective, it would be interesting to extend or adapt the semi-parametric model and Bayesian procedure  they introduced by considering that the distributions of the log-leaders are log-concave rather than assuming that they can be approximated by a Gaussian random variable.   \\ Note also that, in the present article, we only considered  the two  major multifractality parameters used for classification which are deduced form wavelet leaders, i.e. $c_1$  and $c_2$. The third  major multifractality parameter, namely,  the uniform regularity exponent $H^{\min}_f$, is derived directly from wavelet coefficients; its statistical estimation will be addressed in \cite{BNHHSJ2}.  \\
Finally, it would be valuable to relax the assumption of independence between scale coefficients, as used in the theoretical study of Section \ref{Sec: theoretical study}, and to develop a model that addresses the dependencies highlighted numerically (see for instance \cite{Morel_etal_2022_dependencies}).  A first question is to determine which results concerning the multifractal analysis of random wavelet series  may be modified if one drops the assumption of independence between scale; this is one of the topics addressed in the forthcoming paper \cite{BNHHSJ2}, where we will focus on the important example  where the laws of wavelet coefficients are  Gaussian mixtures. \\
Another important issue is  to extend this analysis to $p$-leaders, for which the supremum of wavelet coefficients  in the definition of wavelet leaders is replaced by a  $l^p$-norm.  Indeed the use of $p$-leaders has a wider range of validity (leaders require a uniform regularity assumption which is often not met in practice, see \cite{Jaffard2015}) and furthermore, it has been noticed that  a multifractal  analysis based on $p$-leaders for $p$ close to 2 enjoys better statistical properties than when using wavelet leaders \cite{PART1,PART2}.
\backmatter
\bmhead{Acknowledgements} The authors are very grateful to  V\'eronique Billat from the  IBISC laboratory at  Universit\'e d'\'Evry Paris-Saclay for allowing them to use the marathon runners data which she and her team collected.  

\begin{appendices}
\section{Equality between the 1-leader and the 3-leader scaling functions}\label{App: equality leaders}

\begin{proof}[Proof of Proposition \ref{equalscal}]
For simplicity, we prove the result in the one-variable case, noting that it extends naturally to the general case.
\noindent
First, since for all $\lambda\geq 0$, $\ell^1_{\lambda }\leq \ell^3_{\lambda }$, it follows that all $q>0$ and $j$, $S_1(j,q)\leq S_3(j,q)$,
so that 
\[ \forall q >0, \qquad \zeta_1(q)\geq \zeta_3(q). \]
Similarly, we have that for all $q<0$ and $j$, $S_1(j,q)\geq S_3(j,q)$ so that 
\[ \forall q <0, \qquad \zeta_1(q)\leq \zeta_3(q). \]
In order to obtain the reversed inequalities, we note that 
$ \ell^3_{\lambda } = \max  ( \ell^1_{\lambda^- }, \ell^1_{\lambda }, \ell^1_{\lambda^+ } ) $,
where $\lambda^-$ and $\lambda^+$ are the two dyadic intervals neighbours to  $\lambda$ and of the same generation. It follows that for all $q>0$,  $S_3(j,q)\leq  3 S_1(j,q)$, so that 
\[ \forall q >0, \qquad \zeta_3(q)\geq \zeta_1(q). \]
Assume now that $q <0$. We denote by $\lambda'(\lambda) $ and $\lambda''(\lambda) $ the two sub-intervals of $\lambda$  of generation $j+2$ which are included in the interior of $\lambda$ ($\lambda'(\lambda) $ being on the left of $\lambda''(\lambda) $). it follows that $3 \lambda'(\lambda) \subset \lambda$, so that  
\[ \ell^3_{\lambda'(\lambda) } = \underset{ \lambda'' \subset  3\lambda'(\lambda)}{\sup} \vert c_{\lambda'' }\vert \leq \underset{ \lambda'' \subset  \lambda}{\sup} \vert c_{\lambda'' }\vert = \ell^1_{\lambda }. \] 
Denote by $\Lambda_j$ the set of dyadic intervals of generation $j$ (i.e. of length $2^{-j}$). It follows that
\[  \forall q <0, \qquad 
\sum_{\lambda\in {\Lambda_j}} \vert \ell^1_{\lambda} \vert^q \leq \sum_{\lambda\in {\Lambda_j}} \vert \ell^3_{\lambda'(\lambda)} \vert^q \leq \sum_{\lambda'\in {\Lambda_{j+2}}} \vert \ell^3_{\lambda'} \vert^q .  \]
It follows that 
\[  \forall q <0, \qquad  S_1(j,q)\leq  4 \cdot  S_3(j+2,q), \]
so that 
\[ \forall q <0, \qquad \zeta_1(q)\geq \zeta_3(q). \]
Putting together the four inequalities we have proved between values of the scaling functions yields \eqref{equalzeta}. 
\end{proof}

\section{General bounds on Mills ratio}\label{App: Mills ratio}


In probability theory, the Mills ratio \cite{Mills_1926} states that for a continuous real random variable $X$ with density $f$ and for any $x\in\R$,
\begin{equation}\label{Eq: Mills ratio}
\frac{I(x)}{f(x)}=\lim_{\varepsilon\to 0}\frac{1}{\varepsilon}\mathbb P(x<X\leq x+\varepsilon|X>x),
\end{equation}
where for all $x\in\mathbb R$, we define $I(x)=\mathbb P(X>x)$.
Bounding \eqref{Eq: Mills ratio} provides insights on the distribution of the tails of a random variable. For instance, if $X$ has a standard normal distribution, for all $x>0$,
\begin{equation*}
\frac{xe^{-x^2/2}}{\sqrt{2\pi}(x^2+1)}\leq I(x)\leq \frac{e^{-x^2/2}}{\sqrt{2\pi}x}.
\end{equation*}
In the same spirit, in the following lemma, we provide  bounds for random variables having an exponential density. 
\begin{lemm}[General bounds for Mills ratio]\label{Lem: Mills ratio}
Let $g:\mathbb R\rightarrow\mathbb R$  be a twice differentiable positive function such that $g'$ is positive on $\R_+^*$ and $g''$ has a constant sign on $\mathbb{R}_+$. Assume, moreover, that there exists a function $M:\R_+\to\R_+$ such that for $x>0$, 
\begin{equation*}
\sup_{t\in[x,\infty)} \bigg|\frac{g''(t)}{(g'(t))^2}\bigg|\leq M(x).
\end{equation*}
Let $X$ be a real random variable with density $f=\kappa e^{-g}$, where $\kappa>0$ is a normalisation constant.
For all $x\in\mathbb R$, we define $I(x)=\mathbb P(X>x)=\int_x^{\infty}\kappa e^{-g(t)}dt$.
Then, 
\begin{equation}\label{Eq: Mills IPP}\forall x\in\R_+^*, \qquad 
\frac{f(x)}{g'(x)(1+M(x)\mathbf 1_{\{g''>0\}})}\leq I(x)\leq \frac{f(x)}{g'(x)(1-M(x)\mathbf 1_{\{g''<0\}})}.
\end{equation}
\end{lemm}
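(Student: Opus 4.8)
The plan is to reduce the two-sided estimate to a single integration by parts, exploiting the fact that the density $f=\kappa\e^{-g}$ satisfies the first-order relation $f'(t)=-g'(t)f(t)$, equivalently $f(t)=-f'(t)/g'(t)$ for $t>0$. Substituting this into $I(x)=\int_x^\infty f(t)\,dt$ and integrating by parts on a finite interval $[x,T]$, with $u(t)=1/g'(t)$ and $dv=-f'(t)\,dt$ (so that $v=-f$ and $du=-g''(t)/(g'(t))^2\,dt$), I would obtain
\[
\int_x^T f(t)\,dt=\frac{f(x)}{g'(x)}-\frac{f(T)}{g'(T)}-\int_x^T f(t)\,\frac{g''(t)}{(g'(t))^2}\,dt.
\]
Letting $T\to\infty$ then yields the exact identity $I(x)=\dfrac{f(x)}{g'(x)}-R(x)$, where $R(x):=\int_x^\infty f(t)\,g''(t)/(g'(t))^2\,dt$ is the remainder I have to control.

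The core estimate comes from the hypothesis $\sup_{t\geq x}|g''(t)/(g'(t))^2|\leq M(x)$, which gives $|R(x)|\leq M(x)\int_x^\infty f(t)\,dt=M(x)\,I(x)$ since $f\geq 0$. The sign of $R(x)$ is that of $g''$, which is constant on $\R_+$ by assumption, so I split into two cases. If $g''>0$ then $R(x)\in(0,M(x)I(x)]$; inserting this into $I(x)=f(x)/g'(x)-R(x)$ gives at once $I(x)\leq f(x)/g'(x)$ on one side and, from $I(x)\geq f(x)/g'(x)-M(x)I(x)$, the rearrangement $I(x)\geq f(x)/\big(g'(x)(1+M(x))\big)$ on the other, which is exactly \eqref{Eq: Mills IPP} when $\mathbf 1_{\{g''>0\}}=1$. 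The case $g''<0$ is symmetric: now $R(x)<0$ with $-R(x)\leq M(x)I(x)$, so $I(x)\geq f(x)/g'(x)$ and $I(x)(1-M(x))\leq f(x)/g'(x)$, yielding the upper bound with the factor $(1-M(x)\mathbf 1_{\{g''<0\}})$. This elementary algebra delivers both inequalities in each case.

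The one genuinely delicate point, and the step I expect to require care, is justifying that the boundary contribution $f(T)/g'(T)$ vanishes as $T\to\infty$. I would argue as follows: the finite-interval identity shows that $f(T)/g'(T)$ converges to some nonnegative limit $L$, since the remaining two terms converge (to $f(x)/g'(x)-I(x)$ and to the absolutely convergent $R(x)$). If $L>0$ then $f(T)\sim L\,g'(T)$ near infinity; but $g$ must tend to $+\infty$ (otherwise $f=\kappa\e^{-g}$ would be bounded below by a positive constant and fail to be integrable), hence $\int^\infty g'=\infty$ and therefore $\int^\infty f=\infty$, contradicting that $f$ is a density. Thus $L=0$ and the boundary term drops out. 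Once this is settled the proof is complete, so I anticipate that everything beyond this vanishing argument follows directly from the single integration by parts together with the uniform control $M(x)$.
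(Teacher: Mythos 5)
Your proof is correct and follows essentially the same route as the paper: the identity $I(x)=f(x)/g'(x)-\int_x^\infty f(t)\,g''(t)/(g'(t))^2\,dt$ obtained by integration by parts, the bound $|R(x)|\leq M(x)I(x)$, and the case split on the sign of $g''$. Your careful justification that the boundary term $f(T)/g'(T)$ vanishes as $T\to\infty$ is a welcome addition that the paper's proof leaves implicit.
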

\begin{proof}
For $x\in\R_+^*$,
\begin{align*}
\kappa^{-1}I(x)
&=\int_x^\infty \frac{g'(t)}{g'(t)}e^{-g(t)}dt=\lim_{A\rightarrow\infty}\Big[-\frac{1}{g'(t)}e^{-g(t)}\Big]_x^{A}-\int_x^{\infty}\frac{g''(t)}{(g'(t))^2}e^{-g(t)}dt\\
&=\frac{1}{g'(x)}e^{-g(x)}-\int_x^{\infty}\frac{g''(t)}{(g'(t))^2}e^{-g(t)}dt.
\end{align*}
If $g''>0$, we get that for any $x\in\R_+^*$,
\begin{equation*}
 \frac{e^{-g(x)}}{g'(x)}-\kappa^{-1}M(x)I(x)\leq \kappa^{-1}I(x)\leq \frac{e^{-g(x)}}{g'(x)},
\end{equation*}
whereas if $g''<0$, 
\begin{equation*}
 \frac{e^{-g(x)}}{g'(x)}\leq \kappa^{-1}I(x)\leq \frac{e^{-g(x)}}{g'(x)}+\kappa^{-1}m(x)I(x).
\end{equation*}
The two inequalities lead to \eqref{Eq: Mills IPP}.
\end{proof}
\begin{example}
\begin{enumerate}
\item \textbf{Standard Gaussian.} Taking $g(x)=x^2/2$ and $\kappa=(2\pi)^{-1/2}$, we find $M(x)=1/x^2$. This boils down to the well-known result:
\begin{equation*}. \forall  x>0,  \qquad 
\frac{1}{\sqrt{2\pi}}\frac{\e^{-x^2/2}}{x(1+1/x^2)}\leq I(x)\leq \frac{1}{\sqrt{2\pi}}\frac{\e^{-x^2/2}}{x}.
\end{equation*}
\item \textbf{Generalized Gaussian with light tails ($\beta>1$).} Taking $g(x)=|x|^{\beta}$ and $\kappa_\beta=\beta/(2\Gamma(1/\beta))$ with $\beta>1$, clearly  $g'>0$ and $g''>0$ on $\R_+^*$. We have that for all $x> 0$, $M(x)=(\beta-1)/(\beta x^\beta)$ and 
\begin{equation}\label{Eq Mills beta grand}
\frac{1}{2\Gamma(1/\beta)}\frac{\e^{-x^\beta}}{ x^{\beta-1}(1+(\beta-1)/(\beta x^\beta))}\leq I(x)\leq \frac{1}{2\Gamma(1/\beta)}\frac{\e^{-x^\beta}}{ x^{\beta-1}}.
\end{equation}
\item \textbf{Generalized Gaussian with heavy tails ($0<\beta<1$).} Taking $g(x)=|x|^{\beta}$ and $\kappa=\beta/(2\Gamma(1/\beta))$ with $\beta<1$, we check that $g'>0$ and $g''<0$ on $\R_+^*$. We have that for all $x>0$, $M(x)=(1-\beta)/(\beta x^\beta)$ and that
\begin{equation}\label{Eq Mills beta petit}
\frac{x^{1-\beta}\e^{-x^\beta}}{2\Gamma(1/\beta)}\leq I(x)\leq \frac{1}{2\Gamma(1/\beta)}\frac{x^{1-\beta}\e^{-x^\beta}}{(1-(1-\beta)/(\beta x^\beta))}.
\end{equation}
\item \textbf{Laplace distribution ($\beta=1$).} Taking $g(x)=|x|$ and $\kappa=1/(2\Gamma(1))=1/2$ and 
\begin{equation}\label{Eq Mills beta 1}
\forall x>0,\qquad I(x)=\frac{1}{2}\e^{-x}=:f_1(x).
\end{equation}
\item \textbf{Asymptotics}. Given the density \eqref{Eq: density generalized Gaussian} of a generalized Gaussian random variable $X$, we have, asymptotically,
\begin{equation}\label{Eq: Asymptotics generalised gaussian}
\mathbb P(X>x)\underset{x\to\infty}\sim \frac{f_\beta(x)}{\beta x^{\beta-1}}.
\end{equation}
Note that, since $X$ is symmetric, we obtain, asymptotically,
\begin{equation*}
\mathbb P(|X|\leqslant x)=1-2\mathbb P(X>x)\underset{x}\sim 1-\frac{2f_\beta(x)}{\beta x^{\beta-1}}.
\end{equation*}
\end{enumerate}
\end{example}
\end{appendices}
\bibliography{sn-bibliography}

\end{document}